\pdfoutput=1 
\pdfoutput=1 
\pdfoutput=1 
\pdfoutput=1 
\pdfoutput=1
\documentclass{IEEEtran}                                
\usepackage{graphics} 
\usepackage{epsfig} 
\usepackage{times} 
\usepackage{amsmath} 
\usepackage{amsthm}
\usepackage{amssymb}
\usepackage[colorlinks=false, urlcolor=blue, pdfborder={0 0 0}]{hyperref}
\usepackage{enumerate}
\usepackage{color}
\usepackage[linesnumbered,boxed,ruled,commentsnumbered]{algorithm2e}
\usepackage{subfigure}
\usepackage{comment}
\usepackage{cite}
\usepackage{flushend}
\title{Secure Your Intention:   On  Notions of Pre-Opacity in Discrete-Event Systems}

\author{Shuo~Yang,~\IEEEmembership{Student Member,~IEEE,}
	Xiang~Yin,~\IEEEmembership{Member,~IEEE}
	\thanks{This work was  supported by the National Natural Science Foundation of China (61803259, 61833012) and by Shanghai Jiao Tong University Scientific and Technological Innovation Funds.
	}
	\thanks{Shuo Yang and Xiang Yin  are with Department of Automation
		and Key Laboratory of System Control and Information Processing,
		Shanghai Jiao Tong University, Shanghai 200240, China.
		{\tt\small  \{xiang-yang,yinxiang\}@sjtu.edu.cn.}}
}

\newtheorem{mydef}{Definition}
\newtheorem{mythm}{Theorem}

\newtheorem{mypro}{Proposition}
\newtheorem{myexm}{Example}
\newtheorem{remark}{Remark}

\IncMargin{0.8em}
\begin{document}

	\maketitle
	
	\begin{abstract}
		This paper investigates an important information-flow security property called opacity in partially-observed discrete-event systems.   
		We consider the presence of a passive intruder (eavesdropper) that knows the dynamic model of the system and can use the generated information-flow to infer some ``secret" of the system. 
		A system is said to be opaque if it always holds the plausible deniability for its secret. 
		Existing notions of opacity only consider secret either as currently visiting some secret states or as having visited some secret states in the past.
		In this paper, we investigate information-flow security from a new angle by considering the secret of the system as the \emph{intention} to execute some particular behavior of importance in the future.
		To this end, we propose a new class of opacity called \emph{pre-opacity} that characterizes whether or not the intruder can predict the visit of secret states a certain number of steps ahead before the system actually does so. 
		Depending the prediction task of the intruder, we propose two specific kinds of pre-opacity called  \emph{$K$-step instant  pre-opacity} and \emph{$K$-step trajectory  pre-opacity} to specify this concept.
		For each notion of pre-opacity, we provide a necessary and sufficient condition as well as an effective verification algorithm.  
		The complexity for the verification of pre-opacity is exponential in the size of the system as we show that pre-opacity is inherently PSPACE-hard. Finally, we generalize our setting to the case where the secret intention of the system is modeled as executing a particular sequence of events rather than visiting a secret state. 
	\end{abstract}
	\begin{IEEEkeywords}
		Discrete-Event Systems,  Opacity, Prediction
	\end{IEEEkeywords}
	
	\section{Introduction}
	In the past decade, the notion of opacity has drawn a lot of attention in the Discrete-Event Systems (DES) literature as it provides a formal approach towards the verification and design of information-flow security for dynamics systems. 
	Roughly speaking, opacity is a confidentiality property that captures whether or not the information-flow generated by a dynamic system can reveal some ``secret behavior"  to an outside observer (intruder) that is potentially malicious.
	In other words, an opaque system should always maintain the plausible deniability for its secret behavior during its execution. 
	In the context of DES, opacity has been extensively studied for different system models including
	finite-state automata \cite{saboori2011verificationa,saboori2013verification,yin2017new,mohajerani2019transforming}, labeled transition systems \cite{bryans2008opacity,kuizecdc17} and Petri nets \cite{bryans2005modelling,tong2017verification,Tong17OpacityPetriNets,cong2018line}. 
	More recently, opacity has been extended to continuous dynamic systems with possibly infinite state spaces and time-driven dynamics \cite{ramasubramanian2020notions,an2020opacity,yin2020approximate}.
	Many enforcement techniques have also been proposed when the original system is not opaque; see, e.g.,  
	\cite{takai2008formula,dubreil2010supervisory,cassez2012synthesis,darondeau2014enforcing,zhang2015max,ji2018enforcement,wu2018synthesis,tong2018current,behinaein2019optimal,mohajerani2019compositional}.
	Opacity has also been applied to certify/enforce security in many real-world systems including mobile robots \cite{saboori2011coverage}, location-based services \cite{wu2014ensuring}, battery management systems \cite{lin2020information} 
	and web services \cite{bourouis2017verification}. 
	The reader is referred to the survey papers \cite{jacob2015opacity,lafortune2018history} for more details on opacity and its recent developments.
	
	In order to capture different security requirements, different notions of opacity have been proposed in the literature.  
	For example, in language-based opacity \cite{lin2011opacity}, the secret is formulated as the executions of some particular secret strings. As shown in \cite{wu2013comparative}, this formulation is equivalent to the notion of current-state opacity, where the secret is formulated as a set of secret states 
	and a system is current-state opaque if the intruder cannot determine for sure that the system is currently at a secret state. 
	In some situations, the system may want to hide its initial location or its location at some specific previous instant; such requirements can be captured by initial-state opacity \cite{saboori2013verification} and $K$/infinite-step opacity \cite{saboori2011verificationa,saboori2011verificationb,falcone2015enforcement,yin2017new}, respectively.  
	More recently, quantitative notions of opacity have been proposed for stochastic DES in order to measure the secret leakage of the system; see, e.g.,  \cite{saboori2014current,berard2015probabilistic,keroglou2016probabilistic,chen2017quantification,wu2018privacy,yin2019infinite}.

	As we can see from the above discussion,  ``secret" in opacity analysis is actually a generic concept. 
	Based on what kind of information the user would like to hide, or equivalently, how the intruder can utilize information to infer the secret of the system,  existing notions of opacity in the literature as reviewed above can generally be  divided into the following  two categories: 
	\begin{itemize}
		\item
		Opacity for Current Information: the intruder wants to determine the current behavior of the system based on the current observation. In other words, the user does not want the outsider to know for sure that  it \emph{is currently doing} something secret. This category includes, e.g.,  current-state opacity  and language-based opacity.    
		\item
		Opacity for Delayed Information: the intruder wants to determine the  previous secret behavior of the system at some instant based on the current observation.  In other words,  the user does not want the outsider to know for sure that it \emph{has done} something secret at some previous instant. 
		This category includes, e.g., initial-state opacity, $K$-step opacity  and infinite-step opacity.  
		Note that delayed information is involved here as the intruder does not need to specify the visit of a secret state immediately; it can use future information to improve its knowledge about the previous instants.   
	\end{itemize}
	There are also some works that combine these two types of opacity together, e.g., by combing current-state opacity and initial-state opacity, one can define the notion of initial-final-state opacity  \cite{wu2013comparative}.

	In some applications, however, the ``secret" one wants to hide can be its \emph{intention} to do something of particular importance in the future. As a simple motivating example, let us consider a single robot moving in a region whose mobility is described by a DES shown in Figure~\ref{fig:G}, where each state represents a location and each transition represents an action. Some actions are assumed to be observable by outsider; $E_{o}=\{o_1,o_2,o_3\}$ are observable actions. The robot may choose to attack state $9$ by reaching it. 
	However, it does not want to reveal its intention to  attack  state $9$ too early; otherwise, e.g., some defense  strategy can be implemented in advance.  Clearly, the shortest path to reach state $9$ is
	$0\xrightarrow{o_1}3\xrightarrow{o_2}6\xrightarrow{o_3}9$. However, by doing so, the outsider will know the robot's intention of attack two steps ahead just by observing the first action $o_1$. On the other hand, the robot can choose to attack state $9$ via path  $0\xrightarrow{u_2}2\xrightarrow{o_2}5\xrightarrow{o_1}8\xrightarrow{o_2}11\xrightarrow{o_3}9$, 
	which  is longer but allows the robot to hide its intention of visiting state $9$ until it actually reaches it. This is because this path has the same observation of $0\xrightarrow{u_1}1\xrightarrow{o_2}4\xrightarrow{o_1}7\xrightarrow{o_2}10$ whose continuation may not necessarily be secret. 
	Existing notions of opacity cannot capture this scenario as this problem  essentially requires another type of opacity for \emph{future information}: the user does not want the outsider to know too early for sure that it \emph{will do} something secret at some future instant. 
	
	\begin{figure}
		\center
		\includegraphics[width=0.3\textwidth]{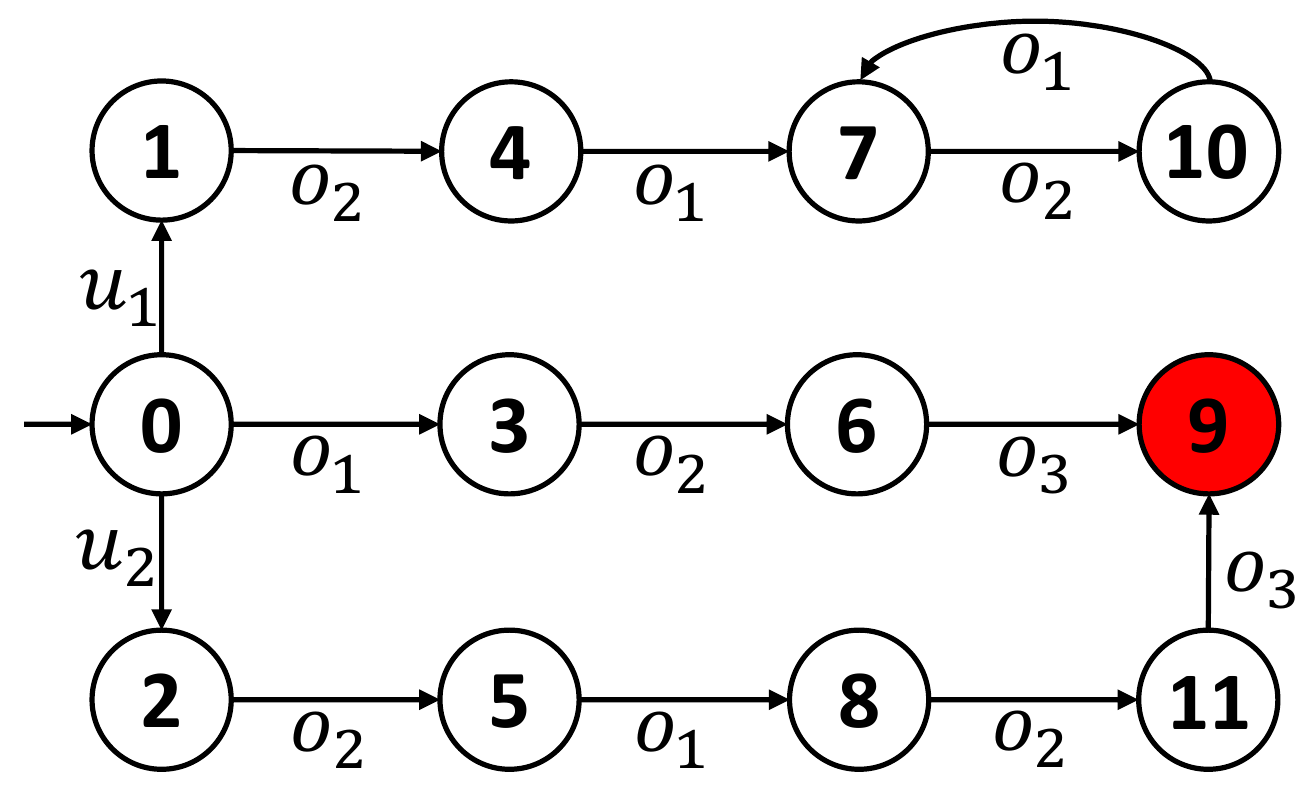}
		\caption{A motivating example with $E_o=\{a,b,c\}$ and $E_{uo}=\{u\}$. State 6 is the target (secret) state. }\label{fig:G}
	\end{figure}
	
	In this paper, we investigate opacity from a new angle by considering the system's \emph{intention} of executing some particular behavior as the secret.  Then we propose a new type of opacity, called \emph{pre-opacity},  to characterize whether or not the secret intention of the system can be revealed. We follow the standard setting of opacity by considering a passive intruder modeled as an eavesdropper  that  knows the model of the system. Then we propose two notions of pre-opacity called   \emph{$K$-step instant  pre-opacity} and \emph{$K$-step trajectory  pre-opacity}; the former requires that the intruder cannot determine $K$-step ahead for sure that the system will  be at secret states \emph{for some specific instant}, while the latter requires that the intruder cannot determine $K$-step ahead for sure that  the system will visit  secret states in the future without the need of specifying the instant of being secret.  
	Properties of these two notions of pre-opacity are investigated and we show that instant  pre-opacity is strictly weaker than trajectory  pre-opacity. 
	Furthermore, for each pre-opacity, we provide necessary and sufficient condition as well as effective verification algorithm. 
	We show that both properties are PSPACE-hard; hence the exponential verification complexity is unavoidable. 
	Also, we discuss  the case where ``secrets" are modeled as a \emph{sequence pattern}  rather secret states.  
	
	In the systems theory, there are three fundamental types of estimations problems: filtering, smoothing and prediction. 
	Essentially, current-state opacity can be viewed as the plausible deniability for secret under filtering and infinite/$K$-step opacity can be viewed as the plausible deniability for secret under smoothing. Analogously, the proposed notion of pre-opacity can also be interpreted as the plausible deniability for secret under prediction. 
	Therefore, our new notion also generalizes the framework of opacity from the systems theory point of view.

	The proposed notion of pre-opacity, in particular, trajectory pre-opacity, is closely related to the notion of fault predictability (or prognosability) in the literature; see, e.g., 
	\cite{jeron2008pred,genc2009predictability,kumar2010decentralized,takai2015robust,chen2015stochastic,yin2016decentralized}.   
	However, predictability requires that  any fault can be predicted before its occurrence, but our notion of pre-opacity requires that any secret cannot be predicted before it  actually happens. Furthermore, our notion of instant pre-opacity is much more different since it requires to determine the precise instant of being secret, which is not required in predictability analysis.  
	Also, in fault prediction problems, once the system becomes faulty, it is faulty forever. 
	However, in pre-opacity analysis, the system's behavior can become secret/non-secrete intermittently in the sense that, even when the intruder fails to predict the first secret behavior, it may still has chance to predict some future secret so that the security of the system can still be  threatened.  
	Therefore, although predictability is conceptually related to our notion of pre-opacity, these two properties are technically very different.
	
	The rest of the paper is organized as follows. 
	In Section~\ref{sec:2}, we describe the system model and review the existing notions of opacity. 
	Section~\ref{sec:3} introduces the two new notions of pre-opacity and discusses their properties. 
	In Section~\ref{sec:4}, we provide effective algorithms for the verification of notions of pre-opacity. 
	The proposed pre-opacity is further generalized to the case of sequence pattern in Section~\ref{sec:5}. 
	Finally, we conclude this paper by Section~\ref{sec:6}.
	
	\section{Preliminaries}\label{sec:2}
	
	\subsection{System model}
	Let $E$ be a finite set of events.  
	A \emph{string} is a finite sequence of events and we denote by $E^*$ the set of all strings over $E$ including the empty string $\epsilon$. 
	For any string $s\in E^*$, we denote by $|s|$ the length of $s$ with $|\epsilon|=0$.
	A language $L\subseteq E^*$ is a set of strings, and $\bar{L}$ denotes the prefix-closure of $L$, i.e., $\bar{L}=\{u\in E^*: \exists v\in E^*\text{ s.t. }  uv\in L\}$.
	
	We consider a discrete-event system  modeled by a deterministic finite-state automaton  (DFA)
	\[
	G=(X,E,f,X_0),
	\]
	where $X$ is the finite set of states, $E$ is the finite set of events, $f:X\times E \rightarrow X$ is the partial deterministic transition function such that $f(x,\sigma)=x'$ means that there exists a transition from $x$ to $x'$ with event label $\sigma$, and $X_0\subseteq X$ is the set of initial states. 
	The transition function $f$ is also extended to  $f:X\times E^*\rightarrow X$ recursively by: 
	for any $x\in X,s\in E^*,\sigma\in E$, we have $f(x,s\sigma)=f(f(x,s),\sigma)$ with $f(x,\epsilon)=x$.
	
	The language generated by $G$ from state $x\in X$ is defined by 
	$\mathcal{L}(G,x)=\{s\in E^*:f(x,s)!\}$, where ``$!$" means ``is defined''.
	Also, we define $\mathcal{L}(G,Q):=\bigcup_{x\in Q}\mathcal{L}(G,x)$ as the language generated from a set of states $Q\subseteq X$.
	Therefore, the language generated by $G$ is $\mathcal{L}(G):=\mathcal{L}(G,X_0)$. 
	For the sake of simplicity, hereafter, we assume that the system $G$ is live, i.e., 
	for any $x\in X$, there exists $\sigma\in \Sigma$ such that $f(x,\sigma)!$. 
	In some situations, a DFA is also equipped with a set of \emph{marked states} $X_m\subseteq X$ and we write a DFA with marked states as $G=(X,E,f,X_0,X_m)$. 
	Then the marked language of $G$ is 	$\mathcal{L}_m(G)=\{s\in E^*:\exists x_0 \in X_0 \text{ s.t. }f(x,s)\in X_m\}$. 
	
	\subsection{Intruder Model and Opacity}
	Following the standard setting of opacity, we assume that the intruder is modeled as a \emph{passive observer} (eavesdropper), which has the full knowledge of the system's structure. By ``passive", we mean  that the intruder can only  observe some behavior generated by the system, but it cannot actively affect the behavior of the system. 
	Formally, we assume that the event set $E$ is partitioned as:
	\[
	E=E_o\dot{\cup}E_{uo},
	\]
	where $E_o$ and $E_{uo}$ are the set of observable events and the set of unobservable events, respectively. 
	The natural projection from $E$ to $E_o$ is a mapping $P:E^*\rightarrow E^*_o$ defined recursively by:
	\begin{align}
	P(\epsilon)=\epsilon \text{ \ and \ }
	P(s\sigma)=
	\left\{
	\begin{array}{l l}
	P(s)\sigma\quad
	&\text{if } \sigma \in  E_o   \\
	P(s)\quad
	&\text{if } \sigma \notin  E_{uo}
	\end{array}
	\right.
	\end{align}
	The natural projection is also extended to $P:2^{E^*}\rightarrow 2^{E^*_o}$, i.e., $P(L)=\{t\in E^*_o: \exists s\in L \text{ s.t. }P(s)=t\}$ for any $L\subseteq E^*$.
	
	When string $s\in \mathcal{L}(G)$ is generated by the system, the intruder observes $P(s)$
	and it can use this observation together with the dynamic model of the system to infer which state the system could be in at some specific instant. 
	In opacity analysis, it is assumed that the system has a set of secret states, denoted by $X_S\subseteq X$.  
	Roughly speaking, a system is said to be opaque if the intruder can never determine for sure that the system is/was at a secret state based on its observation. 
	Here, we review the notion  of  $K$-step opacity which can be used to define current-state opacity and infinite-step opacity.

	\begin{mydef}($K$-Step Opacity)\label{def:opa} 
		Given system $G$,  set of observable events $E_o$, set of secret states $X_S$, and non-negative integer $K\in \mathbb{N}$, system $G$ is said to be \emph{$K$-step opaque}  (w.r.t.\ $E_o$ and $X_S$) if 
		\begin{align}
		&(\forall x_0\in X_0,\forall st\in \mathcal{L}(G,x_0): f(x_0,s)\in X_S\wedge |P(t)|\leq K) \nonumber \\
		&(\exists x_0'\in X_0)(\exists s't'\in \mathcal{L}(G,x'_0)) \text{ s.t.}\nonumber\\ 
		&[P(s)=P(s')]  \wedge  [ P(t)=P(t')] \wedge  [ f(x_0',s')\notin X_S].
		\end{align} 
		Furthermore,   system $G$ is said to be
		\begin{itemize}
			\item 
			\emph{current-state opaque} if it is $0$-step opaque;  
			\item 
			\emph{infinite-step opaque}  if it is $K$-step opaque for any $K\geq 0$.
		\end{itemize}
	\end{mydef}
	
	Intuitively, $K$-step opacity says that, whenever the system visits a secret state, 
	it should be able to keep this secret unrevealed within the next $K$ steps. In other words, the intruder should never be able to determine that the system was at a state secret for any instant in the past $K$ steps. 
	Note that current-state opacity can be viewed as a special case of $K$-step opacity ($K=0$) as it essentially requires that the intruder cannot determine for sure that the system is currently at a secret state. 
	To verify current-state opacity, one can construct the current-state estimator (or the observer) and check whether or not there exists a reachable  estimator state that only contains secret states. The verification of $K$-step opacity and infinite-step opacity are more involved as they require the computation of  delayed state estimate, which can be done by constructing the two-way observer\cite{yin2017new}.

	\begin{figure}
		\centering
		\subfigure[$G_1$]{\label{fig:G1}
			\includegraphics[width=0.23\textwidth]{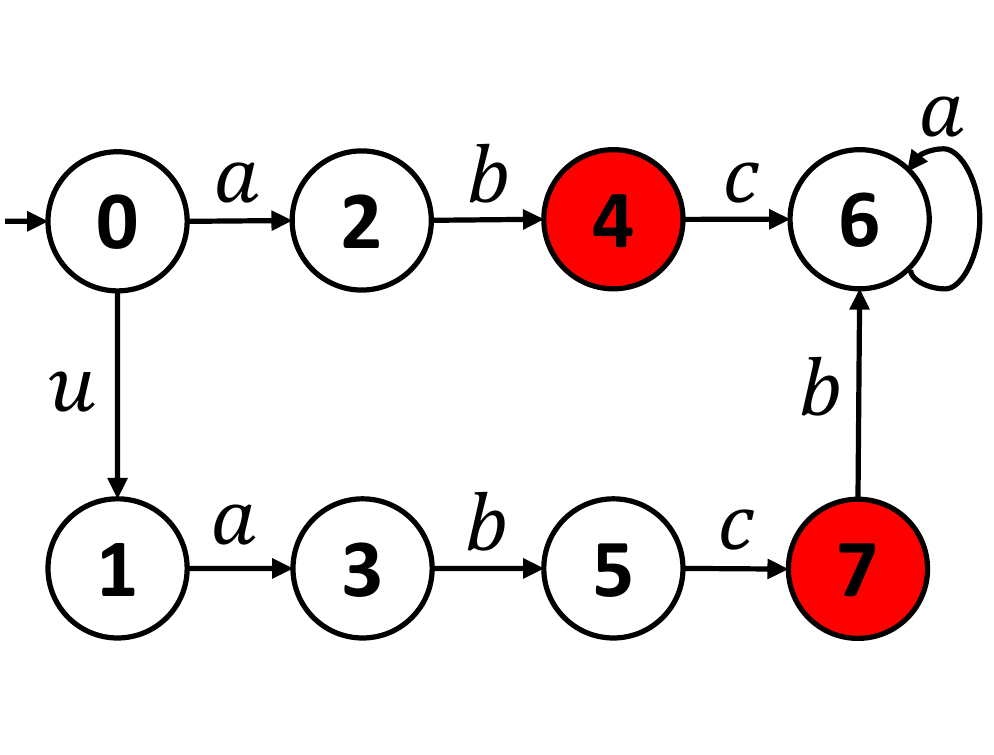}}
		\subfigure[$G_2$]{\label{fig:G2}
			\includegraphics[width=0.23\textwidth]{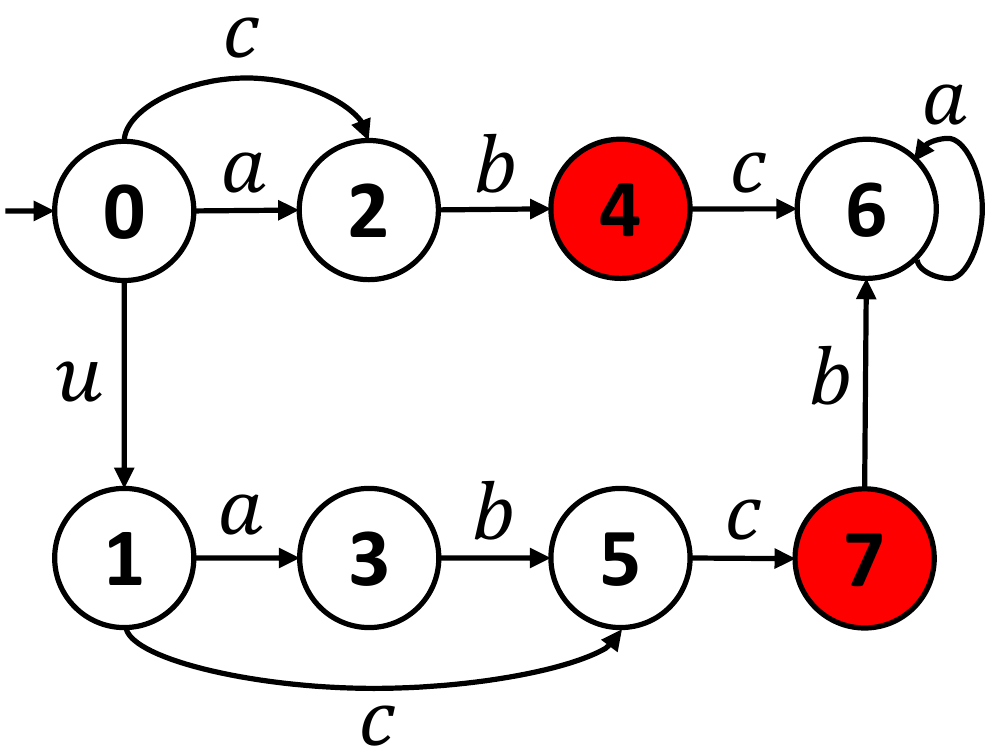}}
		\caption{For both systems, we have $X_0=\{0\},X_S=\{4,7\}$ and $E_o=\{a,b,c\}$.}. \label{fig:exm1}
	\end{figure} 
	
	\begin{myexm} 
		Let us consider system $G_1$ shown in Figure~\ref{fig:G1}, where  $X_S=\{4,7\}$ and $E_o=\{a,b,c\}$. Clearly, this system is current-state opaque. For example, by observing $ab$, the intruder cannot determine whether the system is at secret state $4$ or at non-secret state $5$ since $P(ab)=P(uab)=ab$. Similarly, when secret state $7$ is reached via $uabc$, the intruder still  cannot distinguish this state from non-secret state $6$. 
		On the other hand, this system is not $1$-step opaque. This is because, by observing $abcb$, the intruder can determine for sure that the system was at secret state $7$ one step ago. Therefore, $G_1$ is also not infinite-step opaque.
	\end{myexm}
	\section{Notions of Pre-Opacity}\label{sec:3}
	In this section, we first provide the definitions of $K$-step instant  pre-opacity and $K$-step trajectory  pre-opacity for DES. Then we discuss  properties of the proposed notions of pre-opacity. 
	
	\subsection{Definitions of $K$-step Instant/Trajectory Pre-Opacity}
	In the existing notions of opacity, secret is either  characterized by 
	whether  the system is doing something secret (current-state opacity) or characterized by whether the system has done something secret ($K$-step and infinite-step opacity). 
	These settings essentially assume that the system is operating against an intruder whose functionality is a current-state estimator or a delayed state-estimator.
	
	However, in some applications, what the system wants to hide might be its \emph{intention}, i.e.,   maintain the plausible deniability for its    willing to do something  secret in the future. In this setting, the system is essentially operating against an intruder that can be interpreted as a \emph{predictor}. 
	More specifically, the user may require that the intruder should never be able to determine its intention of visiting a secret state too early, which is characterized by a parameter $K$. 
	To this end, we first propose the notion of \emph{$K$-step instant pre-opacity} as follows; the reason why we use terminology ``instant" here will be clear soon. 
	
	\begin{mydef}($K$-Step Instant Pre-Opacity)\label{def:ins}   
		Given system $G$,  set of observable events $E_o$, set of secret states $X_S$, and non-negative integer $K\in \mathbb{N}$, system $G$ is said to be  $K$-step instant pre-opaque (w.r.t.\ $E_o$ and $X_S$) if 
		\begin{align}
		&(\forall x_0\in X_0,\forall s\in \mathcal{L}_o(G,x_0))(\forall n \geq K ) \nonumber \\
		&(\exists x_0'\in X_0,\exists s'\in \mathcal{L}_o(G,x'_0) , \exists t\in \mathcal{L}(G,f(x_0',s'))\text{ s.t. } \nonumber\\
		&[P(s)=P(s')]  \wedge  [|t|=n]  \wedge [f(x_0',s't)\notin X_S]
		\end{align}
		where 
		\[
		\mathcal{L}_o(G,x):=(\mathcal{L}(G,x) \cap E^*E_o) \cup \{\epsilon\}
		\]
		is the set of strings generated from $x$ that end up with observable events including the empty string. 
	\end{mydef}
	
	Intuitively,   $K$-step instant pre-opacity requires that, for any string $s$ generated from some initial state $x_0$ and any future instant $n\ge K$, there exists another observation-equivalent string $s'$ generated from some initial state $x_0'$ such that $s'$ can reach a non-secret state in exact $n$ steps. 
	In other words, the intruder can never determine more than $K$ steps ahead, based on its current observation, that the system will visit a secret state at some future instant.  
	Therefore, $K$ can be viewed as a parameter that determines \emph{how early} the user does not want to reveal its intention. For instance, if $K=2$, then the user may allow the intruder to determine just one step ahead that it will visit a secret system. 
	We use the following example to illustrate this notion.

	\begin{myexm}\label{exm:1} 
		First, let us consider again system $G_1$ in Figure~\ref{fig:G1}. 
		One can easily check that this system is $1$-step instant pre-opaque. 
		For example, for string $a\in \mathcal{L}_o(G)$, the intruder cannot predict for sure that the system will be at a secret in one step since there exist  another string $ua$ and its one-step extension $b$ such that $P(ua)=P(a)$ but $f(0,uab)=5\notin X_S$. 
		Similarly, the  intruder also cannot predict for sure that the system will be at a secret after  $2$ steps. 
		For example, when observing $\epsilon$, the system may reach non-secret state $3$ in two steps, which protects the possible secret intention of executing $ab$; 
		when observing $a$, the system may reach  non-secret state $6$ in two step, which protects the possible secret intention of executing $uabc$. 
		
		However, for system $G_2$ in Figure~\ref{fig:G2}, where $X_S=\{4,7\}$ and $E_o=\{a,b,c\}$, one can check that this system is not $1$-step instant pre-opaque. 
		This is because, by observing $c$, the intruder can determine for sure that the system is either at state $2$ or at state $5$. However, from either state $2$ or $5$, the system will reach a secret state in the next step. Therefore, its intention of visiting secret states will be revealed one step before it actually happens. 
	\end{myexm}

	\begin{remark} 
		In Definition~\ref{def:ins},  ``step" is counted by the number of occurrences of \emph{actual events} rather than the occurrences of observable events, i.e., we consider $|t|=n$ rather than $|P(t)|=n$. We believe this setting is more natural for predicting future instants.  Furthermore, we consider string $s$ in $\mathcal{L}_o(G,x_0)$ rather than $\mathcal{L}(G,x_0)$. 
		This implicitly assumes that the intruder will make a prediction immediately after observing a new observable event. 
		Hereafter, we will introduce the main developments based on this setting. 
	\end{remark}
	
	Note that $K$-step instant pre-opacity requires that the intruder cannot predict $K$-step ahead that the system will visit a secret state at some \emph{specific instant}.  This is also why we call it ``instant" pre-opacity. 
	However, in some situations, the intruder may just want to know whether or not the system will visit a secret state in the future without the need of telling the specific instant. 
	For instance, for $G_2$ in Figure~\ref{fig:G1}, after observing $a$, although the intruder cannot determine for sure the specific instant when the secret state will be reached (the system will visit a secret state in one step or in two steps), it can still tell that the system will visit a secret state within the next two steps and at least one step before the occurrence of the first secret state. 
	To capture this scenario, we propose the notion of $K$-step trajectory pre-opacity. 
	
	\begin{mydef}($K$-Step Trajectory Pre-Opacity)   \label{def:traj} 
		Given system $G$,  set of observable events $E_o$, set of secret states $X_S$, and non-negative integer $K\in \mathbb{N}$, system $G$ is said to be $K$-step trajectory pre-opaque (w.r.t.\ $E_o$ and $X_S$) if 
		\begin{align}
		&(\forall x_0\in X_0,\forall s\in \mathcal{L}_o(G,x_0))(\forall n\ge K)  \nonumber \\
		&(\exists x_0'\in X_0,\exists s'\in \mathcal{L}_o(G,x'_0) , \exists t_1t_2\in \mathcal{L}(G,f(x_0',s'))\text{ s.t. } \nonumber\\
		&[P(s)=P(s')]  \wedge  [|t_1|=K]\wedge  [|t_1t_2|=n]  \wedge \nonumber\\
		&\quad[\forall w\!\in\! \overline{\{t_2\}}:f(x_0',s't_1w)\!\notin\! X_S]\nonumber
		\end{align}
	\end{mydef}
	
	Intuitively, $K$-step trajectory pre-opaque says that the intruder will never be able to determine $K$-step ahead for sure that the system will visit a secret state. More specifically, if a system is not $K$-step trajectory pre-opaque, then according to Definition~\ref{def:traj}, it means that there exist  a string $s$ and an integer $n\geq K$ such that any observation equivalent string $s'$ must pass a secret state between the next $K$th instant and the next $n$th instant in the future. In other words, the intruder can determine the system's intention of visiting a secret state more than $K$-step ahead. We use the following example to illustrate this notion.
	
	\begin{myexm} 
		Let us consider again $G_1$ shown in Figure~\ref{fig:G1} and we have shown in Example~\ref{exm:1} that this system is $1$-step instant pre-opaque. 
		However, it is not $1$-step trajectory pre-opaque. 
		For example, let us consider $\epsilon\in \mathcal{L}_o(G)$ and $n=4\geq 1=K$. 
		Note that $\epsilon$ itself is the only observation equivalent string in $\mathcal{L}_o(G)$. 
		However, any $4$-step extension of $\epsilon$, either $abca$ or $uabc$,  will necessarily pass a secret state between the first instant and the forth instant.
		On the other hand, this system is $3$-step trajectory pre-opaque.  
		This is because the only instant to predict the visit of a secret state $3$-step ahead is when observing $\epsilon$.  
		However, with this observation, it is possible that the system will be at state $6$, from which no secret state will be visited,  after three steps. Therefore, the intruder can never determine $3$-step ahead for sure that a secret state will be visited.
	\end{myexm} 
	
	\begin{remark} 
		Similar to the interpretations of current-state opacity and $K$-step opacity, where the system is operating against the current-state estimator and delay-state estimator, respectively, here one can image that the system is operating against an intruder working as a \emph{predictor} (for its secret intention). 
		Roughly speaking, both   $K$-step instant pre-opacity and   $K$-step trajectory pre-opacity require that the intruder can never predict its secret  $K$-steps ahead. However, the specific prediction tasks of the ``virtual predictor"   in these two notions are different: in instant pre-opacity, the predictor also needs to identify the precise future instant at which the system will be at a secret state, while in trajectory pre-opacity, the predictor just needs to identify the inevitability of passing through a secret state without specifying the visiting instant.
	\end{remark}
	
	\subsection{Properties of Pre-Opacity}
	\begin{figure*}
		\centering
		\includegraphics[width=0.6 \textwidth]{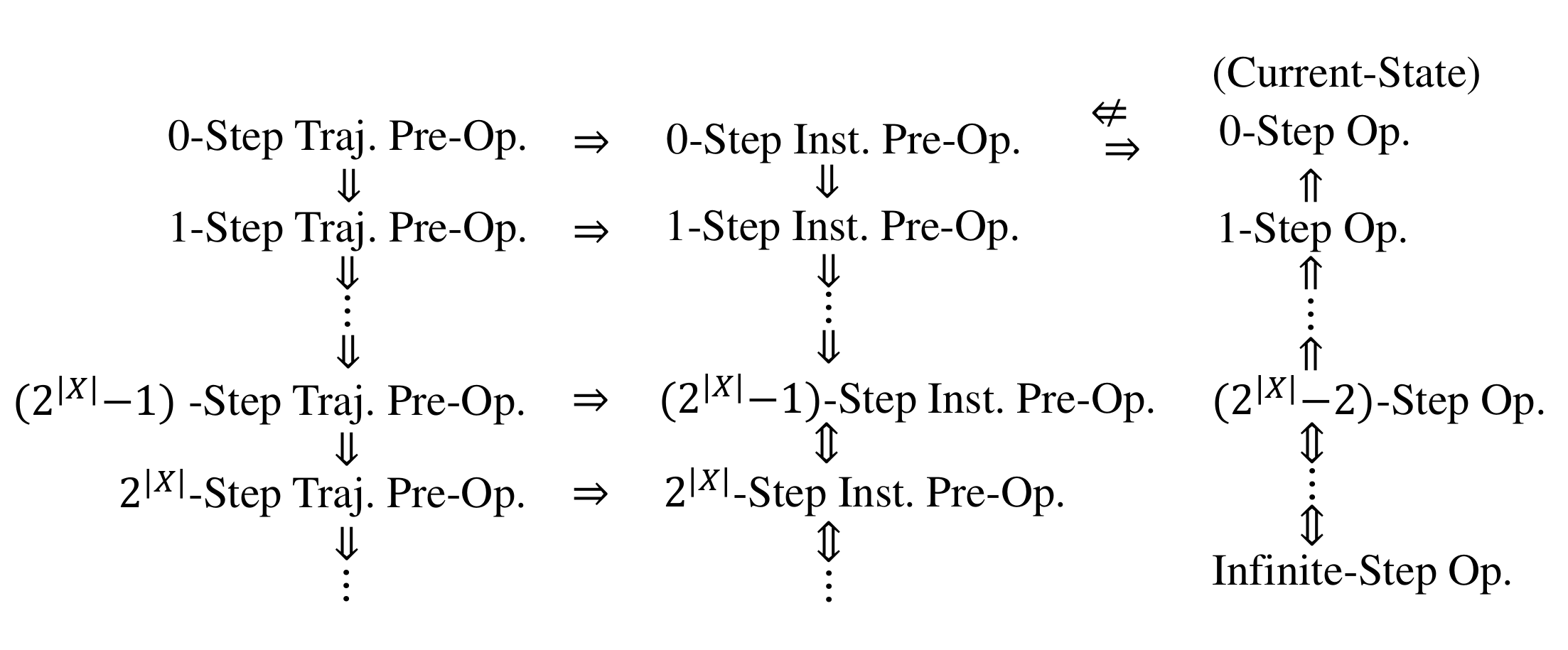} 
		\caption{Relationships among different notions of opacity and pre-opacity.} \label{fig:relation}
	\end{figure*} 
	Now, we discuss  properties of the proposed notions of pre-opacity and their relationships with other notions of opacity in the literature. 
	First, we show that, for any $K$,  $K$-step instant pre-opacity is  weaker than   $K$-step trajectory pre-opacity. 
	
	\begin{mypro} 
		If $G$ is $K$-step trajectory pre-opaque, then it is $K$-step instant  pre-opaque. 
	\end{mypro}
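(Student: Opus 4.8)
The plan is to show that any witness certifying $K$-step trajectory pre-opacity for a fixed instance of the quantified data can be reused verbatim as a witness for $K$-step instant pre-opacity of the same instance. The guiding observation is that the trajectory condition is strictly stronger than the instant condition: the former forces an entire segment of the future trajectory, from the $K$th instant up to the $n$th instant, to avoid secret states, whereas the latter constrains only the single state reached at the $n$th instant. Since the two definitions share \emph{identical} outer quantifiers over $x_0$, $s$ and $n$, it suffices to exhibit the inner witness.

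Concretely, I would first fix arbitrary $x_0\in X_0$, $s\in \mathcal{L}_o(G,x_0)$ and $n\ge K$, which is exactly the universally quantified data appearing in Definition~\ref{def:ins}. Applying the hypothesis that $G$ is $K$-step trajectory pre-opaque to this same triple, Definition~\ref{def:traj} supplies $x_0'\in X_0$, $s'\in \mathcal{L}_o(G,x_0')$ and a string $t_1t_2\in \mathcal{L}(G,f(x_0',s'))$ satisfying $P(s)=P(s')$, $|t_1|=K$, $|t_1t_2|=n$, and $f(x_0',s't_1w)\notin X_S$ for every prefix $w\in\overline{\{t_2\}}$. I would then set $t:=t_1t_2$ and check that $(x_0',s',t)$ meets the three requirements of instant pre-opacity. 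Membership $t\in\mathcal{L}(G,f(x_0',s'))$ and the length identity $|t|=|t_1t_2|=n$ are immediate, and $P(s)=P(s')$ is inherited directly. For the last requirement $f(x_0',s't)\notin X_S$, I would instantiate the trajectory condition at the particular prefix $w=t_2$, which indeed lies in $\overline{\{t_2\}}$; this yields $f(x_0',s't_1t_2)=f(x_0',s't)\notin X_S$, as desired.

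There is essentially no technical obstacle here, and no case analysis is needed. The only point worth flagging is the small bookkeeping step of noticing that the universal quantifier over prefixes $w\in\overline{\{t_2\}}$ in Definition~\ref{def:traj} includes the full string $w=t_2$, whose associated state is precisely the state reached at the $n$th instant that instant pre-opacity is concerned with. Once this is observed, the implication follows immediately from reusing the same witness, so the bulk of the ``proof'' is simply verifying that each clause of Definition~\ref{def:ins} is entailed by the corresponding clause of Definition~\ref{def:traj}.
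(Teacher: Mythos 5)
Your proof is correct and matches the paper's argument: the paper likewise sets $t:=t_1t_2$ from Definition~\ref{def:traj} as the witness for Definition~\ref{def:ins}, with your instantiation of $w=t_2$ merely making explicit the step the paper leaves implicit.
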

	\begin{proof}
		This result follows directly from the definitions. 
		If the system is   $K$-step trajectory pre-opaque, then by setting $t$ in 
		Definition~\ref{def:ins} as $t_1t_2$ in Definition~\ref{def:traj}, we know that the system is $K$-step instant pre-opacity.
	\end{proof}
	
	The intuition of the above result can also be interpreted as follows. For the case of instant pre-opacity, the prediction task of intruder is more challenging than that for the case of trajectory opacity due to the need of determining the specific secret instant. Therefore, from the system's point of view, the underlying security property becomes weaker.  
	
	Also, by definitions, we note that $K$-step instant pre-opacity becomes weaker when $K$ increases, 
	i.e., $K$-step instant pre-opacity always implies $(K+1)$-step instant pre-opacity. 
	However, the following  result  shows that there is an upper bound for $K$ in instant pre-opacity, i.e., pre-opacity will not keep getting strictly weaker when $K$ increases.  
	
	To present our result, we introduce two necessary concepts. First, for each state $x\in X$, the set of states that can be reached from $x$ in exactly $K$ steps  is define by
	\begin{align} 
	R_K(x)\!=\!\{ x'\!\in\! X:\exists s\!\in\! \mathcal{L}(G,x)\text{ s.t. }f(x,s)\!=\!x'\wedge |s|\!=\!K  \}.  
    \end{align}
    For a set of states $q\subseteq X$, we  also define $R_K(q):=\bigcup_{x\in q}R_K(x)$ as the set of states that can be reached from set $q$ in exactly $K$ steps.
		
	Also, let $\alpha\in P(\mathcal{L}(G))$ be an observed string. 
	Then the current-state estimate upon the occurrence of $\alpha$ without the unobservable tail is defined by
	\begin{align}
	\hat{\mathcal{E}}(\alpha)\!=\!\{ f(x_0,s)\!\in\! X: \exists x_0\!\in\! X_0, s\!\in\! \mathcal{L}_o(G,x_0)\text{ s.t. } P(s)\!=\!\alpha     \}.
	\end{align}
	
	Then we have the following the theorem showing the upper bound of $K$ in instant pre-opacity.
	\begin{mythm}\label{thm:bound}
		For any $K'>K\ge 2^{|X|}-1$, system $G$ is $K'$-step instant pre-opaque, if and only if, $G$ is $K$-step instant pre-opaque.
	\end{mythm}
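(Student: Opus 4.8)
The plan is to translate $K$-step instant pre-opacity into an equivalent reachability condition on current-state estimates, and then exploit the eventual periodicity of the reachable-set dynamics on the finite lattice $2^X$.

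First I would observe that, for a fixed observation $\alpha$, the existential quantifiers over $x_0'$ and $s'$ in Definition~\ref{def:ins} range exactly over the states of the current-state estimate $\hat{\mathcal{E}}(\alpha)$. Consequently, the clause asserting existence of $x_0',s',t$ with $P(s')=\alpha$, $|t|=n$ and $f(x_0',s't)\notin X_S$ holds if and only if $R_n(\hat{\mathcal{E}}(\alpha))\not\subseteq X_S$, i.e., the set of states reachable in exactly $n$ steps from $\hat{\mathcal{E}}(\alpha)$ contains a non-secret state. Since the requirement in Definition~\ref{def:ins} depends on $s$ only through $\alpha=P(s)$, this yields the clean characterization: $G$ is $K$-step instant pre-opaque if and only if, for every reachable observation $\alpha\in P(\mathcal{L}(G))$ and every $n\ge K$, we have $R_n(\hat{\mathcal{E}}(\alpha))\not\subseteq X_S$.

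The direction ``$K$-step $\Rightarrow$ $K'$-step'' is immediate from the monotonicity already noted before the theorem (a larger $K$ gives a weaker property), so I would concentrate on ``$K'$-step $\Rightarrow$ $K$-step.'' Fix a reachable $\alpha$ and write $q=\hat{\mathcal{E}}(\alpha)$ and $q_n=R_n(q)$. Because $G$ is live, every $q_n$ is a nonempty subset of $X$, and the sequence $(q_n)_{n\ge 0}$ is the orbit of $q$ under the single-step reachable-set operator $q\mapsto R_1(q)$ acting on the finite set $2^X$. Any orbit of a map on a finite set eventually becomes periodic; by a pigeonhole argument over the at most $2^{|X|}$ distinct subsets, the pre-period $\mu$ and the period $\lambda$ satisfy $\mu+\lambda\le 2^{|X|}$, hence $\mu\le 2^{|X|}-1$.

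The crucial consequence is that, for every threshold $N\ge\mu$, the set of values $\{q_n:n\ge N\}$ equals the full cycle of the orbit and is therefore independent of $N$. Since $K'>K\ge 2^{|X|}-1\ge\mu$, we get $\{q_n:n\ge K\}=\{q_n:n\ge K'\}$, so ``$q_n\not\subseteq X_S$ for all $n\ge K'$'' and ``$q_n\not\subseteq X_S$ for all $n\ge K$'' are literally the same condition. As this holds for every reachable $\alpha$, the characterization above shows that $K'$-step and $K$-step instant pre-opacity coincide. The main obstacle I anticipate is making the reduction in the first step fully rigorous: in particular, verifying that the per-$n$ existential choice of the initial state and witness string collapses exactly to $R_n(\hat{\mathcal{E}}(\alpha))\not\subseteq X_S$ (allowing the witness state to depend on $n$), and confirming that $\alpha$ ranging over $P(\mathcal{L}(G))$ captures all quantified pairs $(x_0,s)$ with $s\in\mathcal{L}_o(G,x_0)$. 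Once this dictionary is in place, the periodicity argument is routine.
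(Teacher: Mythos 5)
Your proof is correct and follows essentially the same route as the paper's: both reduce $K$-step instant pre-opacity to the condition that $R_n(\hat{\mathcal{E}}(\alpha))\not\subseteq X_S$ for all reachable $\alpha$ and all $n\ge K$, and then apply a pigeonhole/periodicity argument to the sequence of exact-$n$-step reachable sets in the finite lattice $2^X$. The only difference is packaging: the paper reduces to $K'=K+1$, argues by contradiction, and shifts a repeated pair $R_m(\hat{\mathcal{E}}(P(s)))=R_n(\hat{\mathcal{E}}(P(s)))$ with $0\le m<n\le K$ forward to violate $(K+1)$-step pre-opacity, whereas you invoke the eventual periodicity of the orbit directly to conclude that the tail sets $\{R_n(\hat{\mathcal{E}}(\alpha)):n\ge K\}$ and $\{R_n(\hat{\mathcal{E}}(\alpha)):n\ge K'\}$ coincide once both thresholds exceed the pre-period.
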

	
	\begin{proof}
		It is trivial that $K$-step instant pre-opacity implies $K'$-step instant pre-opacity.
		Hereafter, we show that $K'$-step instant pre-opacity also implies $K$-step instant pre-opacity. 
		Without loss of generality, we assume that $K'=K+1$ as the
		argument can be applied inductively.
		
		Now we assume, for the sake of contradiction, that $G$ is not $K$-step instant pre-opaque but $G$ is $(K+1)$-step instant pre-opaque, where $K\ge 2^{|X|}-1$. 
		This implies that there exists an initial state $x_0\in X_0$ and a string $s\in \mathcal{L}_o(G,x_0)$ such that
		\begin{align}
		&(\forall x_0'\in X_0)(\forall s'\in \mathcal{L}_o(G,x'_0), s't\in \mathcal{L}(G,x'_0)) \nonumber\\
		&[P(s)=P(s')  \wedge  |t|=K  ] \Rightarrow [ f(x_0',s't)\in X_S].\nonumber
		\end{align}
		Equivalently, we have $R_K(\hat{\mathcal{E}}(P(s)))\subseteq X_S$. 
		Since for any $i\in \mathbb{N}$, $R_{i}(\hat{\mathcal{E}}(P(s)))$ is non-empty and it has at most $|X|$ elements, there are only $(2^{|X|}-1)$ choices for $R_{i}(\hat{\mathcal{E}}(P(s)))$. 
		Moreover, since the cardinality of multi-set $\{R_{j}(\hat{\mathcal{E}}(P(s))):j=0,1,\cdots,K\}$ is $K+1\ge 2^{|X|}>2^{|X|}-1$, 
		we know that there exist two integers $0\le m<n\le K$, such that $R_{m}(\hat{\mathcal{E}}(P(s)))$ = $R_{n}(\hat{\mathcal{E}}(P(s)))$. 
		Then we know that
		\begin{align}
			&\!\!\!\!\!\!R_{K+n-m}(\hat{\mathcal{E}}(P(s)))=R_{K-m}(R_{n}(\hat{\mathcal{E}}(P(s))))\nonumber\\
			&=R_{K-m}(R_{m}(\hat{\mathcal{E}}(P(s))))=R_{K}(\hat{\mathcal{E}}(P(s)))\subseteq X_S
		\end{align}
		i.e., for initial state $x_0\in X_0$ and string $s\in \mathcal{L}_o(G,x_0)$, we also have that
		\begin{align}
		&(\forall x_0'\in X_0)(\forall s'\in \mathcal{L}_o(G,x'_0), s't\in \mathcal{L}(G,x'_0)) \nonumber\\
		&[P(s)=P(s')  \wedge  |t|=K+n-m ] \Rightarrow [f(x_0',s't)\in X_S]\nonumber.
		\end{align}
		Since $K+n-m\ge K+1$, we know that $(K+1)$-step instant pre-opacity is violated, which is a contradiction.
	\end{proof}
	
	\begin{figure}
		\centering
		\includegraphics[width=0.23\textwidth]{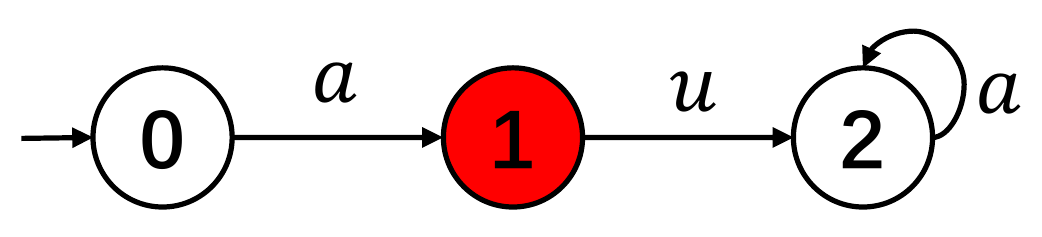} 
		\caption{A system that is current-state opaque but is not $0$-step instant pre-opaque, where $X_S=\{1\}$ and $E_{uo}=\{u\}$.} \label{fig:G_3}
	\end{figure}

	One may conjecture that  $0$-step instant  pre-opacity is equivalent to current-state opacity. 
	However, it is not exactly the case. 
	For example, let us consider system $G_3$ shown in Figure~\ref{fig:G_3}. 
	This system is current-state opaque as the intruder cannot distinguish states $1$ and $2$ after observing $a$ due to unobservable event $u$. 
	On the other hand, it is not $0$-step instant pre-opaque according to our definition since the intruder can predict one step ahead for sure that the system will reach the secret state when observing nothing. This difference is due to the fact that we consider instant in terms of actual event steps rather than the observation steps.   
	The only conclusion we can draw is that current-state opacity is weaker than $0$-step instant  pre-opacity, which is stated as follows.
	
	\begin{mypro} 
		If $G$ is  $0$-step instant/trajectory pre-opaque, then $G$ is current-state opaque. 
	\end{mypro}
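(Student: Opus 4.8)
The plan is to first collapse the two cases into one and then prove a single implication. Since it was just established that $K$-step trajectory pre-opacity implies $K$-step instant pre-opacity, in particular $0$-step trajectory pre-opacity implies $0$-step instant pre-opacity. Hence it suffices to show that \emph{$0$-step instant pre-opacity implies current-state opacity}, and the trajectory case follows immediately.

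The key observation is that instant pre-opacity, specialized correctly, hands us exactly the witness that current-state opacity demands. Unwinding Definition~\ref{def:opa} at $K=0$, the continuation $t$ is forced to be unobservable, so current-state opacity amounts to the statement that for every string $r$ with $f(x_0,r)\in X_S$ there exist $x_0'\in X_0$ and $s'$ with $P(s')=P(r)$ and $f(x_0',s')\notin X_S$. On the other hand, instantiating Definition~\ref{def:ins} at $K=0$ with the choice $n=0$ forces $|t|=0$, hence $t=\epsilon$, so the definition guarantees precisely such a non-secret, observation-equivalent witness $s'$ for every $s\in\mathcal{L}_o(G,x_0)$.

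Accordingly, I would argue as follows. Fix any $r\in\mathcal{L}(G,x_0)$ with $f(x_0,r)\in X_S$ and set $\alpha=P(r)$. First I would pass from $r$ to a string in $\mathcal{L}_o(G,x_0)$ by deleting its maximal unobservable suffix: writing $r=\tilde{s}v$ with $v\in E_{uo}^*$ and $\tilde{s}$ either $\epsilon$ or ending with an observable event, we have $\tilde{s}\in\mathcal{L}_o(G,x_0)$ and $P(\tilde{s})=P(r)=\alpha$. Next I would apply $0$-step instant pre-opacity to $\tilde{s}$ with $n=0$, obtaining $x_0'\in X_0$ and $s'\in\mathcal{L}_o(G,x_0')$ with $P(s')=P(\tilde{s})=\alpha$ and $f(x_0',s')\notin X_S$. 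Taking $t'=\epsilon$, this $s'$ is exactly the observation-equivalent non-secret witness required by Definition~\ref{def:opa} at $K=0$, which completes the argument.

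The only point needing care is the mismatch in how the two notions treat unobservable tails: instant pre-opacity quantifies over $\mathcal{L}_o(G,x_0)$ (strings with no trailing unobservable events), whereas the secret-reaching string $r$ in current-state opacity may end with unobservable events. The truncation step resolves this, and it is harmless since deleting an unobservable suffix preserves the observation $\alpha$; equivalently, in estimator language, $\hat{\mathcal{E}}(\alpha)$ is contained in the full current-state estimate, so $\hat{\mathcal{E}}(\alpha)\not\subseteq X_S$ already forces the full estimate not to be contained in $X_S$. I expect this bookkeeping about unobservable tails, rather than any substantive difficulty, to be the main thing to get right.
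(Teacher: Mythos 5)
Your proof is correct and follows essentially the same route as the paper's: reduce the trajectory case to the instant case via the earlier proposition, strip the unobservable tail of the secret-reaching string to land in $\mathcal{L}_o(G,x_0)$, and instantiate Definition~\ref{def:ins} with $n=0$ to produce the observation-equivalent non-secret witness. The extra care you take with the unobservable-tail bookkeeping is exactly the point the paper's proof also handles, just stated more explicitly.
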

	\begin{proof}
		It suffices to show that  $0$-step instant pre-opacity implies current-state opacity since  $0$-step trajectory  pre-opacity is stronger than $0$-step instant pre-opacity.
		Suppose  $G$ is   $0$-step instant pre-opaque. 
		Let us consider arbitrary initial state $x_0\in X_0$ and string $s\in \mathcal{L}(G,x_0)$. 
		Note that for string $s$, we can always find $\hat{s}\in  \mathcal{L}_o(G,x_0)$, by removing the unobservable tail (if any) of $s$ such that $P(s)=P(\hat{s})$. 
		Since $G$ is   $0$-step instant pre-opaque, by setting $n$ in Definition~\ref{def:ins} as $n=0$, we know that 
		there exist $ x_0'\in X_0$ and $s'\in \mathcal{L}_o(G,x'_0))$ such that $P(s')=P(\hat{s})=P(s)$ and $ f(x_0',s')\notin X_S$. 
		This means that the system is current-state opaque.
	\end{proof} 
	
	Based on the above discussion, we  summarize the relationships among the proposed notions of pre-opacity and existing notions of opacity in Figure~\ref{fig:relation}.
	
	\begin{remark} 
		Finally, we note that the proposed concept of pre-opacity is also related to the notion of fault predictability or fault prognosability in the literature \cite{jeron2008pred,genc2009predictability,kumar2010decentralized}, which captures whether or not a fault event can always be predicted unambiguously a certain number of steps ahead before it actually occurs. Conceptually, by considering the visit of secret states as fault, trajectory pre-opacity can be viewed as a dual problem of predictability. However,  trajectory pre-opacity is not exactly non-predictability. The former requires that \emph{all} secret paths cannot be predicted, while the latter says \emph{some} fault path cannot be predicted. Furthermore, our notion of instant opacity is quite different from predictability as we need to determine the specific instant of being secret; this issue does not occur in predictability analysis.
	\end{remark} 
	
	\section{Verification of Pre-Opacity}\label{sec:4}
	
	In this section, we show how to verify the proposed notions of pre-opacity. 
	Specifically, we present two state-based necessary and sufficient conditions for  $K$-step instant  pre-opacity and $K$-step trajectory  pre-opacity, respectively, that can be checked using the observer structure. Then we discuss the complexity of the verification problems.
	
	\subsection{Necessary and Sufficient Condition for  Instant  Pre-Opacity}
	Recall that a system is not $K$-step instant pre-opaque if after some observation, 
	each  possible state (immediately after the observation) will visit a secret state  in exactly $n$ steps for some $n\geq K$.  
	This suggests that $K$-step instant pre-opacity can be checked by combining the current-state estimation together with the reachability analysis. 
	To this end, we  further  introduce some necessary notions. 
	
	We say that a state $x\in X$ is a \emph{$K$-step indicator} state if it will reach a secret state inevitably in exactly $K$ steps, i.e.,  
	\[
	R_K(x)\subseteq X_S.
	\]
	For any $K\in \mathbb{N}$, we define
	\[
	\Im_K:=\{x\in X:  R_K(x)\subseteq X_S  \} \subseteq X
	\]
	as the set of $K$-step indicator states.
	
	Then the following theorem shows  that $K$-step instant  pre-opacity can be simply characterized in terms of current-state estimate and $K$-step indicator states. 
	\begin{mythm}\label{thm:strong-basic}
		System $G$ is $K$-step instant pre-opaque if and only if 
		\[
		\forall \alpha\in P(\mathcal{L}(G)), \forall n\geq K: \hat{\mathcal{E}}(\alpha)\not\subseteq \Im_n.
		\]
	\end{mythm}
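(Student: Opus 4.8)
The plan is to prove the theorem by unwinding Definition~\ref{def:ins} and matching its quantifier structure against the state-based condition through a chain of logical equivalences. First I would record the defining identity that, for any observed string $\alpha\in P(\mathcal{L}(G))$,
\[
\{f(x_0',s'): x_0'\in X_0,\ s'\in\mathcal{L}_o(G,x_0'),\ P(s')=\alpha\}=\hat{\mathcal{E}}(\alpha),
\]
which is immediate from the definition of $\hat{\mathcal{E}}$. This identity is the bridge that lets me convert statements about observation-equivalent strings into statements about the estimate set $\hat{\mathcal{E}}(\alpha)$.

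Next I would reparametrize the outer universal quantifier of Definition~\ref{def:ins}. The pair $(x_0,s)$ with $s\in\mathcal{L}_o(G,x_0)$ enters the inner condition only through its projection $P(s)$, so quantifying over all such pairs is equivalent to quantifying over all $\alpha\in P(\mathcal{L}(G))$; here one uses that every observation in $P(\mathcal{L}(G))$ is realized by some $s\in\mathcal{L}_o(G,x_0)$ obtained by stripping the unobservable tail. Thus $K$-step instant pre-opacity is equivalent to the statement that, for every $\alpha\in P(\mathcal{L}(G))$ and every $n\ge K$, there exist $x_0'\in X_0$, $s'\in\mathcal{L}_o(G,x_0')$ and $t\in\mathcal{L}(G,f(x_0',s'))$ with $P(s')=\alpha$, $|t|=n$ and $f(x_0',s't)\notin X_S$.

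The core step is to show, for fixed $\alpha$ and $n\ge K$, that this inner existential holds if and only if $\hat{\mathcal{E}}(\alpha)\not\subseteq\Im_n$. Using the identity above, $f(x_0',s')$ ranges exactly over $\hat{\mathcal{E}}(\alpha)$, so the inner condition says precisely that some state $x\in\hat{\mathcal{E}}(\alpha)$ admits a length-$n$ continuation $t$ with $f(x,t)\notin X_S$; by the definition of $R_n$ this means $R_n(x)\not\subseteq X_S$, i.e. $x\notin\Im_n$, which is exactly $\hat{\mathcal{E}}(\alpha)\not\subseteq\Im_n$. Conjoining over all $\alpha$ and all $n\ge K$ then yields the claimed condition.

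The step I expect to require the most care is this last equivalence, specifically the decoupling of the joint existential over $(x_0',s',t)$ into an existential over a state of $\hat{\mathcal{E}}(\alpha)$ followed by an independent existential over its continuation $t$; the projection identity makes the first part legitimate, and the liveness assumption on $G$ guarantees that $R_n(x)$ is nonempty for every $x$, so that the phrasing ``some length-$n$ continuation avoids $X_S$'' is not vacuous and its negation is genuinely $R_n(x)\subseteq X_S$ rather than a vacuous indicator. I would present the whole argument as a single chain of equivalences, which keeps the quantifier bookkeeping transparent.
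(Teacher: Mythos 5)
Your proposal is correct and follows essentially the same route as the paper's proof: the paper also reduces everything to the identity that $f(x_0',s')$ over observation-equivalent $s'\in\mathcal{L}_o$ ranges exactly over $\hat{\mathcal{E}}(\alpha)$, and then identifies the failure of the inner existential with $R_n(\hat{\mathcal{E}}(\alpha))\subseteq X_S$, i.e.\ $\hat{\mathcal{E}}(\alpha)\subseteq\Im_n$. The only cosmetic difference is that the paper writes the argument as two contrapositive implications rather than your single chain of equivalences.
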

	\begin{proof}
		($\Rightarrow$) 
		By contraposition. 
		Suppose that there exists  a string $\alpha\in P(\mathcal{L}(G))$ and an integer $n\geq K$ such that  $\hat{\mathcal{E}}(\alpha)\subseteq \Im_n$. 
		Let us consider
		an initial state $x_0\in X_0$ and a string $s\in \mathcal{L}_o(G,x_0)$ such that $P(s)=\alpha$. 
		Since $\hat{\mathcal{E}}(\alpha)\subseteq \Im_n$,  for any initial state $x_0'\in X_0$ and   string $s'\in \mathcal{L}_o(G,x_0)$ such that $P(s')=\alpha$, we have    $f(x_0',s')\in \Im_n$, i.e., $R_n(f(x_0',s'))\subseteq X_S$ 
		This means that for  
		for any $t\in \mathcal{L}(G,f(x_0',s'))$ and $|t|=n$, we have $f(x_0',s't)\in X_S$.  
		This means that system $G$ is not $K$-step instant  pre-opaque.
		
		($\Leftarrow$) 
		Still by contraposition. 
		Suppose that $G$ is not $K$-step instant  pre-opaque, which means that
		there exists an initial state $x_0\in X_0$, a string $s\in \mathcal{L}_o(G,x_0)$ and an integer $n\geq K$ such that 
		\begin{align}
		&(\forall x_0'\in X_0)(\forall s'\in \mathcal{L}_o(G,x'_0), s't\in \mathcal{L}(G,x'_0)) \nonumber\\
		&[P(s)=P(s')  \wedge  |t|=n  \Rightarrow f(x_0',s't)\in X_S]\nonumber
		\end{align}
		Then let $\alpha=P(s)$. Clearly, we have   $R_{n}(\hat{{\mathcal{E}}}(\alpha))\subseteq X_S$, i.e., $\hat{{\mathcal{E}}}(\alpha)\subseteq \Im_{n}$.
		This violates the condition in the theorem.
	\end{proof}
	
	Theorem~\ref{thm:strong-basic} essentially provides a state-based characterization of the language-based definition of $K$-step instant pre-opacity. However, it still cannot be directly used for the verification of instant pre-opacity. The main issue is that we need to check whether or not $\hat{\mathcal{E}}(\alpha)\not\subseteq \Im_n$ for any $n\geq K$, which has infinite number of instants. The following result further generalizes  Theorem~\ref{thm:strong-basic} and shows that it suffices to check $\hat{\mathcal{E}}(\alpha)\not\subseteq \Im_n$  for a bounded number of instants. 
	
	\begin{mypro}\label{prop:bound}
		For any $\alpha\in P(\mathcal{L}(G))$, the following two statements are equivalent:
		\begin{enumerate}[(i)]
			\item 
			$\forall n\geq K: \hat{\mathcal{E}}(\alpha)\not\subseteq \Im_n$; 
			\item 
			$\forall n\in \{K,K+1, \dots, K+2^{|X|}-1\}: \hat{\mathcal{E}}(\alpha)\not\subseteq \Im_n$.
		\end{enumerate}
	\end{mypro}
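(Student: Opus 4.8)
The plan is to first translate the set-inclusion condition into a reachability condition on the current-state estimate. Observe that, directly from the definitions of $\Im_n$ and of $R_n$ on a set of states, we have for every $n$ that $\hat{\mathcal{E}}(\alpha)\subseteq \Im_n$ holds if and only if $R_n(\hat{\mathcal{E}}(\alpha))\subseteq X_S$; indeed, $\hat{\mathcal{E}}(\alpha)\subseteq\Im_n$ says every $x\in\hat{\mathcal{E}}(\alpha)$ satisfies $R_n(x)\subseteq X_S$, and since $R_n(\hat{\mathcal{E}}(\alpha))=\bigcup_{x\in\hat{\mathcal{E}}(\alpha)}R_n(x)$, this is exactly $R_n(\hat{\mathcal{E}}(\alpha))\subseteq X_S$. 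Writing $q_i:=R_i(\hat{\mathcal{E}}(\alpha))$ for brevity, both statements become conditions about whether $q_n\subseteq X_S$. The implication (i)$\Rightarrow$(ii) is then immediate, since (ii) only quantifies over a subset of the instants appearing in (i).

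For the converse (ii)$\Rightarrow$(i), I would argue by contraposition: assuming $\neg$(i), so that $q_n\subseteq X_S$ for some $n\ge K$, I will produce an index $n'$ in the finite window $\{K,\dots,K+2^{|X|}-1\}$ with $q_{n'}\subseteq X_S$, which is exactly $\neg$(ii). If $n$ already lies in this window there is nothing to do, so assume $n>K+2^{|X|}-1$. The key structural fact is the composition identity $q_{i+j}=R_j(q_i)$, which holds because reaching a state in $i+j$ steps from $\hat{\mathcal{E}}(\alpha)$ is the same as reaching it in $j$ steps from some state reachable in $i$ steps. Since $G$ is live, each $q_i$ is a non-empty subset of $X$, of which there are at most $2^{|X|}-1$; as the window $\{K,K+1,\dots,K+2^{|X|}-1\}$ contains $2^{|X|}$ indices, the pigeonhole principle yields two indices $K\le m<\ell\le K+2^{|X|}-1$ with $q_m=q_\ell$. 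Setting $d:=\ell-m>0$ and applying $R_1$ to both sides of $q_m=q_{m+d}$ repeatedly, the coincidence propagates forward to give $q_i=q_{i+d}$ for all $i\ge m$; that is, the sequence $(q_i)_{i\ge m}$ is periodic with period $d$.

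It then remains to fold the large index $n$ back into the window by subtracting multiples of the period. Concretely, let $n':=m+\big((n-m)\bmod d\big)$, so that $n'\in\{m,\dots,m+d-1\}$ and $n'$ differs from $n$ by a non-negative multiple of $d$; forward periodicity gives $q_{n'}=q_n\subseteq X_S$. Finally $n'\ge m\ge K$ and $n'\le m+d-1=\ell-1<K+2^{|X|}-1$, so $n'$ lies in the required window, completing the contrapositive. I expect the main technical care to be the forward-propagation step—justifying that a single coincidence $q_m=q_\ell$ forces full periodicity of $(q_i)_{i\ge m}$ via the composition identity—together with the bookkeeping that guarantees the reduced index $n'$ genuinely lands inside $\{K,\dots,K+2^{|X|}-1\}$ rather than merely below $n$. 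The counting itself mirrors the pigeonhole argument already used in the proof of Theorem~\ref{thm:bound}, so the novelty here is essentially the periodicity-plus-reduction packaging.
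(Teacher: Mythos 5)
Your proof is correct and follows essentially the same route as the paper's: both reduce the inclusion $\hat{\mathcal{E}}(\alpha)\subseteq\Im_n$ to $R_n(\hat{\mathcal{E}}(\alpha))\subseteq X_S$, apply the pigeonhole principle to the $2^{|X|}$ reachable sets in the window (each a non-empty subset of $X$), and use the composition identity $R_{i+j}=R_j\circ R_i$ to fold any larger index back into the window modulo the detected period. The only difference is presentational — you argue the converse by contraposition while the paper argues it directly — and your counting ($2^{|X|}-1$ possible non-empty values versus $2^{|X|}$ indices) is if anything stated more carefully than in the paper.
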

	\begin{proof}(i)$\Rightarrow$(ii) is trivial. 
		Hereafter, we show that (ii)$\Rightarrow$(i). 
		Let $q:=\hat{{\mathcal{E}}}(\alpha)$ and we consider the reachable set of $q$  for each instant between $K$ and $K+2^{|X|}-1$, i.e., 
		$R_K(q), R_{K+1}(q),\dots,R_{K+2^{|X|}-1}(q)$. 	
		For any $n\in \{K,\dots, K+2^{|X|}-1\}$, since $q\not\subseteq \Im_n$,  we know that there exists $x\in q$ such that $x\notin \Im_n$, i.e., $R_n(x)\not\subseteq X_S$. 
		Since $R_n(q) = \bigcup_{x\in q}R_n(x)$, we know that $R_n(q)\not\subseteq X_S$ for any $n\in \{K,\dots, K+2^{|X|}-1\}$.
		
		Now we note that  set $\{R_i(q):K\le i\le K+2^{|X|}-1\} \subseteq 2^X$ is non-empty, so it contains at most $2^{|X|}$ elements. 
		Therefore, there must exist two instants   $K\le i<j\le K+2^{|X|}-1$ such that $R_{i}(q)=R_{j}(q)$.
		Furthermore, by the definition of $K$-step reachable set, we have 
		\[
		R_{n+k}(q)=R_n( R_k(q) )=R_k(R_n(q))
		\]
		Then for any instant $n'> K+2^{|X|}-1$, we can always write it in the form of 
		\[
		n'=  i+ (j-i)\times k+ m 
		\]
		where  $1\leq k,0\leq m<(j-i)$ are two   integers. Furthermore, since $R_{i}(q)=R_{j}(q)$, we have $R_{i}(q)=R_{i+ (j-i)\times k}(q)$ for any $k\geq 0$, so
		\[
		R_{n'}(q)=R_{m}( R_{i+ (j-i)\times k}(q)  )=R_{m+i}(q). 
		\]
		However, since $m<j-i$, we have $m+i<j$. Therefore,  
		\[
		\forall n'> K+2^{|X|}-1: R_{n'}(q)=R_{m+i}(q)\not\subseteq X_S.
		\]
		This further implies that 
		\[
		\forall n'> K+2^{|X|}-1: q\not\subseteq \Im_{n'}, 
		\]
		which completes the proof.
	\end{proof}
	
	One may ask why we need to search for the entire next $2^{|X|}$ instants to obtain the upper bound in Proposition~\ref{prop:bound}. 
	However, this upper bound seems to be unavoidable. 
	To see this, let us consider the system shown in Figure~\ref{fig:Gexm}, where all events are unobservable and red states denote secret states. This system is not $K$-step instant pre-opaque for any $K$ since one can determine for sure (by observing nothing) that the system will be at a secret state for instants $k\cdot 30,k=1,2\dots$, where $30$ is the least common multiple of cycle lengths $2,3$ and $5$. 
	Therefore, the first violation of  $\hat{\mathcal{E}}(\alpha)\not\subseteq \Im_n$ occurs at $n=30$. 
	Similarly, one could add more states to create more such cycles and the upper bound for searching $\Im_n$ will grow exponentially.
	However, this exponentially searching bound is only needed when the system contains unobservable events. 
	In the following result, we show that such an upper bounded search can be avoided for the extreme case when there is no unobservable event in the system.

	\begin{mypro}\label{prop:nobound}
		Under the assumption that all events in $G$ are observable, then $G$ is $K$-step instant pre-opaque if and only if 
		\[
		\forall \alpha\in P(\mathcal{L}(G)): \hat{\mathcal{E}}(\alpha)\not\subseteq \Im_K.
		\]
	\end{mypro}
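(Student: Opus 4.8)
The plan is to reduce the claim, via Theorem~\ref{thm:strong-basic}, to a purely combinatorial statement about reachable sets. Theorem~\ref{thm:strong-basic} already tells us that $G$ is $K$-step instant pre-opaque if and only if $\hat{\mathcal{E}}(\alpha)\not\subseteq \Im_n$ for every $\alpha\in P(\mathcal{L}(G))$ and every $n\geq K$. So it suffices to prove that, when $E_{uo}=\emptyset$, this ``for all $n\ge K$'' condition is equivalent to its single $n=K$ instance. The direction that the full condition implies the $n=K$ instance is immediate, since $n=K$ is one of the quantified instants. The content lies in the converse: assuming $\hat{\mathcal{E}}(\alpha)\not\subseteq \Im_K$ for all $\alpha$, deduce $\hat{\mathcal{E}}(\alpha)\not\subseteq \Im_n$ for all $\alpha$ and all $n\ge K$.

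First I would isolate the one place where full observability is used. When $E_{uo}=\emptyset$, the projection $P$ is the identity, $\mathcal{L}_o(G,x)=\mathcal{L}(G,x)$, and the estimate $\hat{\mathcal{E}}(\alpha)$ is literally the set of states reached by the string $\alpha$ from $X_0$. Consequently, for any $\beta$ with $\alpha\beta\in P(\mathcal{L}(G))$ we have $\hat{\mathcal{E}}(\alpha\beta)=\{f(x,\beta):x\in\hat{\mathcal{E}}(\alpha),\ f(x,\beta)!\}$, and hence
\[
\hat{\mathcal{E}}(\alpha\beta)\subseteq R_{|\beta|}(\hat{\mathcal{E}}(\alpha)).
\]
This inclusion is the crux: it says that every state the observer can still be in after seeing $\alpha\beta$ lies among the states reachable from $\hat{\mathcal{E}}(\alpha)$ in exactly $|\beta|$ genuine steps.

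With this in hand I would argue by contraposition. Suppose the $n=K$ condition holds but some $\alpha$ and some $n>K$ satisfy $\hat{\mathcal{E}}(\alpha)\subseteq \Im_n$, i.e.\ $R_n(\hat{\mathcal{E}}(\alpha))\subseteq X_S$. Since $G$ is live, there is at least one string $\beta$ of length $n-K$ with $\alpha\beta\in P(\mathcal{L}(G))$. Using the semigroup identity $R_K(R_{n-K}(q))=R_n(q)$ (already invoked in the proof of Proposition~\ref{prop:bound}) together with the inclusion above,
\[
R_K(\hat{\mathcal{E}}(\alpha\beta))\subseteq R_K(R_{n-K}(\hat{\mathcal{E}}(\alpha)))=R_n(\hat{\mathcal{E}}(\alpha))\subseteq X_S,
\]
so $\hat{\mathcal{E}}(\alpha\beta)\subseteq \Im_K$, contradicting the assumed $n=K$ condition applied to the observation $\alpha\beta$.

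The one step that genuinely needs care---and which I expect to be the main obstacle in a fully rigorous write-up---is the observability inclusion $\hat{\mathcal{E}}(\alpha\beta)\subseteq R_{|\beta|}(\hat{\mathcal{E}}(\alpha))$. This is precisely where the ``no unobservable events'' hypothesis is indispensable: in the general case, observing one additional symbol need not correspond to a fixed number of actual steps, since unobservable events may intervene, so $\hat{\mathcal{E}}(\alpha\beta)$ is no longer controlled by $R_{|\beta|}(\hat{\mathcal{E}}(\alpha))$. This is exactly why the general bound of Proposition~\ref{prop:bound} must scan the full window of $2^{|X|}$ instants rather than collapsing to $n=K$. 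The only remaining point to verify is that liveness guarantees the required extension $\beta$ of length $n-K$ exists, so that $\hat{\mathcal{E}}(\alpha\beta)$ is genuinely defined and the contradiction is legitimate.
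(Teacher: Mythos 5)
Your proof is correct and follows essentially the same route as the paper's: both reduce the claim to Theorem~\ref{thm:strong-basic}, take a violating pair $(\alpha,n)$ with $n>K$, advance the observation by $n-K$ actual steps (possible precisely because every event is observable), and conclude that the resulting estimate is contained in $\Im_K$, contradicting the hypothesis. Your write-up is in fact slightly more careful than the paper's, since you state explicitly the inclusion $\hat{\mathcal{E}}(\alpha\beta)\subseteq R_{|\beta|}(\hat{\mathcal{E}}(\alpha))$ that the paper leaves implicit when it asserts $R_K(\hat{\mathcal{E}}(P(st_1)))\subseteq X_S$.
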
 
	\begin{proof}The necessity follows directly from Theorem~\ref{thm:strong-basic}.  
		To show the sufficiency,  suppose that $\forall \alpha\in P(\mathcal{L}(G)): \hat{\mathcal{E}}(\alpha)\not\subseteq \Im_K$ 
		and assume that $G$ is not $K$-step instant pre-opaque. 
		Then by Theorem~\ref{thm:strong-basic}, we know that 
		there exist $\alpha\in P(\mathcal{L}(G))$ and $n> K$ such that $\hat{\mathcal{E}}(\alpha)\subseteq \Im_n$.
		In other words, we have that
		\begin{align}
		&(\forall x_0\in X_0)(\forall s\in \mathcal{L}_o(G,x_0), st\in \mathcal{L}(G,x_0)) \nonumber\\
		&[P(s)=\alpha  \wedge  |t|=n]  \Rightarrow [f(x_0,st)\in X_S]\nonumber
		\end{align}
		For any $t$ satisfying above condition, we let $t=t_1t_2$, where $|t_1|=n-K$ and $|t_2|=K$. Then we know that $R_K(\hat{\mathcal{E}}(P(st_1)))\subseteq X_S$, i.e., $\hat{\mathcal{E}}(P(st_1))\subseteq \Im_K$, which is a contradiction.
	\end{proof}

	\begin{figure}
		\center			\includegraphics[width=0.31\textwidth]{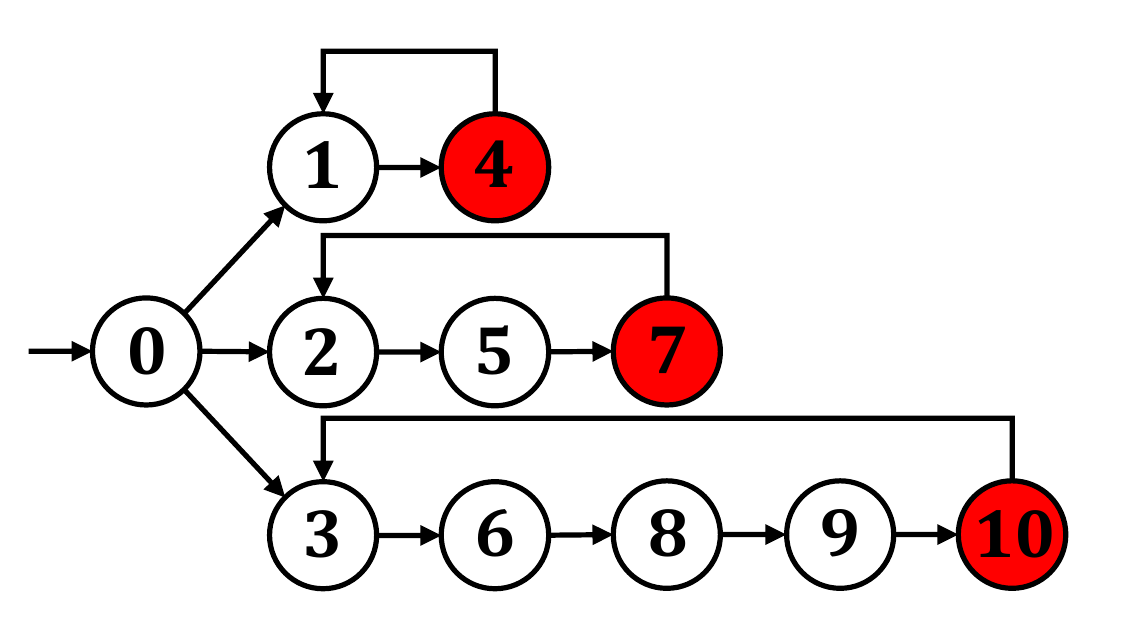}
		\caption{A system  where all events are unobservable and red states denote secret states.}\label{fig:Gexm}
	\end{figure}

	\subsection{Necessary and Sufficient Condition for Trajectory  Pre-Opacity}
	
	Now we discuss the case of $K$-step trajectory pre-opacity. 
	First, we say that a state $x\in X$ is a  \emph{non-indicator state} if there exists an infinitely long string defined from this state along which no secret state is visited. 
	Formally, we define the set of non-indicator states by 
	\begin{align}
	\mathcal{N}:=
	\left\{x\in X: 
	\begin{array}{c c}
	(\forall n\geq 0)(\exists s\in \mathcal{L}(G,x): |s|>n)\\
	(\forall t\in \overline{\{s\}})[ f(x,t)\not\in X_S ]
	\end{array}
	\right\}
	\end{align}
	Since the number of states in $G$ is finite, a state is a non-indicator state if and only it can 
	reach  a cycle, in which all states are non-secret,  via a sequence of non-secret states. 
	In other words, if a state is not in $\mathcal{N}$, then it is an indicator state in the sense that a secret state will be visited inevitably  from this state.  
	
	\begin{remark} 
		Note that a state is not a non-indicator state does not necessarily imply that it is a $K$-step indicator state for some $K$ since the latter requires the system to be at a secret state for some specific instant while the former does not require this information.
		Furthermore, a state is an indicator state does not imply that any state reachable from this state is an indicator state. This is because after passing through a secret state, the status of indicating may become non-indicating. 
	\end{remark}

	Therefore, if the system is at a state whose $K$-step reachable set is a subset of  indicator state, then based on this state information, one can predict $K$-step ahead that a secret state will be visited. 
	We define 
	\[
	\mathcal{N}_{K}:=\{x\in X: R_K(x)\cap \mathcal{N}\not=\emptyset\}\subseteq X
	\]
	as the set of  states the intruder cannot make such a prediction. 
	Then we have the following theorem.

	\begin{mythm}\label{thm:weak}
		System $G$ is $K$-step trajectory pre-opaque if and only if 
		\[
		\forall \alpha\in P(\mathcal{L}(G)):\hat{\mathcal{E}}(\alpha)\cap \mathcal{N}_{K}\not=\emptyset.
		\]
	\end{mythm}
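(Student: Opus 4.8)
The plan is to prove the two directions separately, using the reachability machinery ($\mathcal{N}$, $\mathcal{N}_K$, $R_K$, $\hat{\mathcal{E}}$) to translate the quantifier-heavy language-based definition of trajectory pre-opacity into the single condition on current-state estimates stated in the theorem.

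For the sufficiency direction ($\Leftarrow$), I would argue by direct construction. Fix an arbitrary $x_0\in X_0$, $s\in\mathcal{L}_o(G,x_0)$ and $n\ge K$, and set $\alpha=P(s)$. The hypothesis supplies a state $y\in\hat{\mathcal{E}}(\alpha)\cap\mathcal{N}_K$. Unfolding $y\in\hat{\mathcal{E}}(\alpha)$ produces an initial state $x_0'$ and a string $s'\in\mathcal{L}_o(G,x_0')$ with $P(s')=\alpha$ and $f(x_0',s')=y$; unfolding $y\in\mathcal{N}_K$ produces a string $t_1$ of length $K$ with $z:=f(y,t_1)\in\mathcal{N}$. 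Since $z$ is a non-indicator state it admits arbitrarily long secret-avoiding paths, so I can select a string $t_2$ of length $n-K$ from $z$ along whose every prefix no secret state is visited. These $s',t_1,t_2$ witness the trajectory pre-opacity condition for $(s,n)$, because $f(x_0',s't_1w)=f(z,w)\notin X_S$ for every $w\in\overline{\{t_2\}}$. As $(x_0,s,n)$ were arbitrary, $G$ is $K$-step trajectory pre-opaque.

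The necessity direction ($\Rightarrow$) is where the main obstacle lies, and I expect it to require a pigeonhole/compactness step. Fix $\alpha\in P(\mathcal{L}(G))$ and pick some $s\in\mathcal{L}_o(G,x_0)$ with $P(s)=\alpha$. For each $n\ge K$, trajectory pre-opacity yields a state $y_n=f(x_0',s')\in\hat{\mathcal{E}}(\alpha)$ together with $z_n=f(y_n,t_1)\in R_K(y_n)$ from which a secret-avoiding continuation of length $n-K$ exists. The difficulty is that this witness may depend on $n$, whereas membership in $\mathcal{N}_K$ demands a \emph{single} state with an \emph{infinite} secret-avoiding continuation. I would resolve this by observing that the set $\{(y,z):y\in\hat{\mathcal{E}}(\alpha),\,z\in R_K(y)\}$ is finite (as $X$ is finite), so some fixed pair $(y^\ast,z^\ast)$ serves as the witness for infinitely many $n$. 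For that pair, $z^\ast$ admits secret-avoiding paths of unbounded length, and since $G$ has finitely many states this is precisely the defining condition of $\mathcal{N}$: for every $m$ a secret-avoiding string of length $>m$ from $z^\ast$ exists, each of whose prefixes avoids $X_S$. Hence $z^\ast\in\mathcal{N}$ and $z^\ast\in R_K(y^\ast)$, giving $y^\ast\in\hat{\mathcal{E}}(\alpha)\cap\mathcal{N}_K$.

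The crux of the whole argument is this last passage from ``for every horizon $n$ some state avoids secrets up to $n$'' to ``some fixed state avoids secrets forever.'' Finiteness of $\hat{\mathcal{E}}(\alpha)$ and of $X$ is exactly what legitimizes the pigeonhole step, and I would state explicitly that a state with secret-avoiding paths of every finite length is a non-indicator state, which is the finite-branching analogue of K\"onig's lemma already recorded after the definition of $\mathcal{N}$. A minor point I would check along the way is that taking $w=\epsilon$ in the trajectory condition forces each witness state $z_n$ to be non-secret, so that the construction is consistent with $\mathcal{N}\subseteq X\setminus X_S$.
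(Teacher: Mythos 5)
Your proof is correct, and its overall skeleton matches the paper's: both directions hinge on translating the trajectory pre-opacity quantifiers into statements about $\hat{\mathcal{E}}(\alpha)$, $R_K$, and $\mathcal{N}$. The sufficiency direction is the paper's argument run forwards instead of by contraposition, and your remark that $w=\epsilon$ forces $\mathcal{N}\subseteq X\setminus X_S$ is a worthwhile sanity check. The one place where you genuinely diverge is the necessity direction. The paper instantiates the trajectory pre-opacity definition at a \emph{single} horizon $n=|X|+K$, so that $|t_2|\ge |X|$, and then pigeonholes the states visited along that one secret-avoiding path $t_2$ to exhibit a non-secret cycle; this certifies $f(x_0',s't_1)\in\mathcal{N}$ via the cycle characterization of non-indicator states and avoids any quantification over infinitely many $n$. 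You instead instantiate at every $n\ge K$, pigeonhole over the finitely many witness pairs $(y,z)\in\hat{\mathcal{E}}(\alpha)\times R_K(y)$ to extract one pair valid for infinitely many $n$, and then invoke the ``secret-avoiding paths of every finite length'' formulation of $\mathcal{N}$ --- which, as you note, is exactly how $\mathcal{N}$ is formally defined, so no K\"onig-type limit argument is actually needed. Both routes rest on finiteness of $X$; the paper's is slightly more economical (one instantiation, one explicit cycle), while yours is more robust in that it never needs to choose a specific magic value of $n$ and would survive if the cycle characterization of $\mathcal{N}$ were unavailable.
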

	
	\begin{proof}
		($\Rightarrow$) By contradiction. 
		Suppose that $G$ is $K$-step trajectory  pre-opaque and assume that there exists a string $\alpha\in P(\mathcal{L}(G))$ such that $\hat{\mathcal{E}}(\alpha)\cap \mathcal{N}_{K}=\emptyset$ holds. 
		According to Definition~\ref{def:traj}, for any $n\ge K$, there exists $x'_0\in X_0,s'\in \mathcal{L}_o(G,x'_0), s't_1t_2\in \mathcal{L}(G,x'_0)$ such that $P(s')=P(s)=\alpha, |t_1|= K$, $|t_2|= n-K$ and for any $w\in \overline{\{t_2\}}$, we have $f(x'_0,s't_1w)\not\in X_S$. 
		Now, let us choose $n$ such that $n\ge |X|+K$, i.e., $|t_2|\geq |X|$. 
		Since $f(x'_0,s't_1t_2)$ can pass through at most $|X|$ states, 
		there are at least two repeated states that forms a cycle along the path of   
		$t_2$ starting from  $f(x'_0,s't_1)$. 
		This immediately implies that $f(x'_0,s't_1)\in \mathcal{N}$. 
		Furthermore, we have 
		$f(x'_0,s't_1)\in R_K(f(x'_0,s'))$  since  $|t_1|=K$.
		Therefore, we have 
		$R_K(f(x'_0,s'))  \cap \mathcal{N} \not=\emptyset$, i.e., $f(x'_0,s')\in\mathcal{N}_{K}$.
		Since 
		$P(s')=\alpha$ and $s'\in E^*E_o\cup\{\epsilon\}$, we have 
		$f(x'_0,s')\in \hat{\mathcal{E}}(\alpha)$, which implies that 
		$\hat{\mathcal{E}}(\alpha)\cap \mathcal{N}_{K}\not =\emptyset$. 
		This, however, contradicts to our assumption.
		
		($\Leftarrow$) By contradiction. 
		Suppose that for any $\alpha\in P(\mathcal{L}(G))$, we have $\hat{\mathcal{E}}(\alpha)\cap \mathcal{N}_{K}\not=\emptyset$ and assume that $G$ is not $K$-step trajectory  pre-opaque, i.e., 
		there exist a state $x_0\in X_0$, a string $s\in \mathcal{L}_o(G,x_0)$ and an integer $n \ge K$ such that 
		\begin{align}\label{eq;pf}
		&(\forall x_0'\in X_0)(\forall s'\in \mathcal{L}_o(G,x'_0), \forall t_1t_2\in \mathcal{L}(G,f(x_0',s'))\text{ s.t. } \nonumber\\
		&[P(s)=P(s')  \wedge   |t_1|=K \wedge   [|t_1t_2|=n]   \nonumber\\
		&\Rightarrow [\exists w\!\in\! \overline{\{t_2\}}:f(x_0',s't_1w)\!\in\! X_S] 
		\end{align}
		Let us consider an arbitrary state $x$ in $\hat{\mathcal{E}}(P(s))$. 
		This means that there exist 
		a state $x_0'\in X_0$ and a string $s'\in \mathcal{L}_o(G,x'_0)$ 
		such that $f(x_0',s')=x$ and $P(s')=P(s)$. 
		However, according to Equation~\eqref{eq;pf}, 
		any string of length $n$ from state $x$ must pass through a secret state between its $K$th instant and its $n$th instant. 
		This means that 
		$R_K(x) \cap \mathcal{N}=\emptyset$, i.e., $x\notin \mathcal{N}_K$. 
		Note that $x$ is an arbitrary state in  $\hat{\mathcal{E}}(P(s))$. 
		Therefore, we have 
		$\hat{\mathcal{E}}(P(s)) \cap \mathcal{N}_K=\emptyset$. 
		However, this contradicts to our assumption that  $\hat{\mathcal{E}}(\alpha)\cap \mathcal{N}_{K}\not=\emptyset$ for any $\alpha\in P(\mathcal{L}(G))$.
	\end{proof}
	
	\subsection{Verification Algorithms}
	
	Now, let us discuss how to use the derived necessary and sufficient conditions to verify $K$-step instant or trajectory pre-opacity. 
	To this end, we need to compute
	\begin{itemize}
		\item 
		All possible state estimates, i.e.,  $\{\hat{\mathcal{E}}(\alpha):\alpha\in P(\mathcal{L}(G))\}$; 
		\item 
		A set of $n$-step indicator states for $K\leq n\leq K+2^{|X|}-1$, i.e.,  $\{\Im_K,\dots, \Im_{K+2^{|X|}-1}\}$ (for instant pre-opacity); 
		\item
		The set of states whose $K$-step reachable set contains at least a non-indicator state, i.e., 
		$\mathcal{N}_K$ (for trajectory pre-opacity). 
	\end{itemize}
	
	\subsubsection{Computation of $\hat{\mathcal{E}}(\alpha)$}Note that, compared with the standard current-state estimate, the state estimate considered here does not contain the unobservable tail. This can be computed by a slightly modified version of the standard observer automaton (we still call it observer here for the sake of simplicity). 
	Formally, the observer of $G$ is a new DFA   
	\[
	Obs(G)=(Q_{obs},E_o,f_{obs},q_{obs,0}), 
	\]
	where 
	$Q_{obs}\subseteq 2^{X}\setminus \emptyset$ is the set of states, 
	$E_o$ is the set of events, 
	$q_{obs,0}=X_0$ is the initial state, and 
	$f_{obs}:Q_{obs}\times E_o\to Q_{obs} $ is the deterministic transition function defined by: 
	for any $q\in Q_{obs},\sigma\in E_o$, we have 
	\begin{equation}
	f_{obs}(q,\sigma)=\{ x\!\in\! X: \exists x'\!\in\! q, w\!\in\! E_{uo}^*\text{ s.t. }f(x',w\sigma)=x   \}
	\end{equation}
	For the sake of simplicity, we only consider the reachable part of the observer.  
	Then we have 
	\[
     \forall \alpha\in P(\mathcal{L}(G)): f(q_{obs,0},\alpha)=\hat{\mathcal{E}}(\alpha).
     \]
	Therefore, all possible state estimate $\hat{\mathcal{E}}(\alpha)$ can be computed with complexity $O(|E_o|2^{|X|})$. 
	\subsubsection{Computation of $\Im_n$} 
	For any give $n\geq 0$, one can compute $\Im_n$ by backtracking $n$ steps from the set of all secret states. 
	Formally, one can define an operator 
	$F: 2^X\to 2^X$ by: for any $q\in 2^X$, we have 
	\begin{equation} 
	F(q)=\{x\in X: \forall \sigma\in E, f(x,\sigma)!\text{ s.t. }f(x,\sigma)\in q    \}.
	\end{equation}
	Then one can easily check that 
	\[\Im_n= F^n(\Im_{0})  \text{ with }\Im_0=X_S\] which can be computed with complexity $O( n|E_o||X|  )$.

	\subsubsection{Computation of $\mathcal{N}_K$}
	To compute $\mathcal{N}_K$, first we need to compute the set of non-indicator states $\mathcal{N}$. 
	To this end, we can remove all secret states in $G$ and compute all strongly  connected components, i.e., cycles; this can be done by, e.g.,  Kosaraju's algorithm with a linear complexity in the size of $G$ \cite{algorithm}. Then those states that can reach a non-secret cycle are the set of non-indicator states. 
	Therefore, computing set $\mathcal{N}$ can be done in $O(|E||X|)$.
	In order to compute $\mathcal{N}_K$, one can backtrack from $\mathcal{N}$ using another operator 
	$W: 2^X\to 2^X$ defined by: for any $q\in 2^X$, we have 
	\begin{equation} 
	W(q)=\{x\in X: \exists \sigma\in E\text{ s.t. }f(x,\sigma)\in q    \}.
	\end{equation}
	Then one can easily check that 
	\[
	\mathcal{N}_K= W^K(\mathcal{N}_0)\text{ with }\mathcal{N}_0=\mathcal{N}
	\]
	which can be computed with complexity $O( K|E_o||X|  )$. 
	Therefore, the overall complexity for computing set  $\mathcal{N}_K$ is 
	$O( K|E_o||X|   )$.
	
	Based on the above discussions, we summarize the algorithms for the verification of $K$-step instant pre-opacity and $K$-step trajectory pre-opacity by Algorithm~\textsc{Ins-Pre-Opa-Ver} and Algorithm~\textsc{Traj-Pre-Opa-Ver}, respectively.  
	The complexity of Algorithm~\textsc{Ins-Pre-Opa-Ver} is 
	$O( |E_o|2^{|X|}[K+(K+1)+\cdots+(K+2^{|X|}-1) ]|E_o||X|)=O(|E_o|^2|X|(K+2^{|X|-1})2^{2|X|} )$ for the general case 
	and is $O(K|E_o|^2|X|2^{|X|} )$ under the assumption that there is no unobservable event.
	The complexity of Algorithm~\textsc{Traj-Pre-Opa-Ver} is simply
	$O( K|E_o|^2|X|2^{|X|} )$, which is dominated by the size of the observer.  
	We illustrate the verification algorithms by the following examples. 
	
	\begin{algorithm}	
		\SetAlgoNoLine 
		\SetKwInOut{Input}{\textbf{input}}\SetKwInOut{Output}{\textbf{output}}	
		\Input{
			System $G$ with $X_S$,  $E_o$ and  $K$
		}
		\Output{ \textsc{Yes} or \textsc{No} }
		\BlankLine
		Construct the observer $obs(G)$\;
		\eIf{there is no unobservable event in $G$}
		{
			$M\gets 0$\;
		}
		{
			$M\gets  2^{|X|}-1$\;	
		}
		\For{$q\in Q_{obs}$} 
		{
			\For{$n\in \{K,K+1,\dots, K+M\}$} 
			{
				\If{$q\subseteq \Im_n$}
				{
					\textbf{return} \textsc{No}\;
				}
			} 
		}
		\textbf{return} \textsc{Yes}\;
		\caption{\textsc{Ins-Pre-Opa-Ver}\label{al:instant}}
	\end{algorithm}

	\begin{algorithm}	
		\SetAlgoNoLine 
		\SetKwInOut{Input}{\textbf{input}}\SetKwInOut{Output}{\textbf{output}}	
		\Input{
			System $G$ with $X_S$,  $E_o$ and  $K$
		}
		\Output{ \textsc{Yes} or \textsc{No} }
		\BlankLine
		Construct the observer $obs(G)$\;
		\For{$q\in Q_{obs}$} 
		{
			\If{$q \cap \mathcal{N}_K= \emptyset$}
			{
				\textbf{return} \textsc{No}\;
			}
		}
		\textbf{return} \textsc{Yes}\;
		\caption{\textsc{Traj-Pre-Opa-Ver}\label{al:trajectory}}
	\end{algorithm}
	\begin{myexm}
		Let us consider again system $G_1$ shown  in Figure~\ref{fig:G1} and we verify whether or not it is $K$-step trajectory pre-opaque. 
		First, we build its observer $Obs(G_1)$ as shown in Figure~\ref{fig:obs1}. 
		The only non-indicator state is $6$, i.e.,  $\mathcal{N}=\{6\}$. 
		For  $K=2$, we have $\mathcal{N}_2=W^2( \{6\}  )=\{2,4,5,6,7\}$. 
		Since $\{0\}\cap \{2,4,5,6,7\}=\emptyset$, we know that $G_1$ is not $2$-step trajectory pre-opaque. 
		However, for $K=3$, we have $\mathcal{N}_3=W^3( \{6\}  )=\{0,2,3,4,5,6,7\}$ and each observer state has a common element with $\mathcal{N}_3$. 
		Therefore, $G_1$ is   $3$-step trajectory pre-opaque.  
		These results are also consistent with our previous intuitive analysis. 
		
		However, this system is $K$-step instant pre-opaque for any $K\geq 0$. 
		To see this, it suffices to consider the case of $K=0$. In this case, we have  
		\begin{align}
		&\Im_0=\{4,7\}, \Im_1=\{2,5\}, \Im_2=\{3\}, \Im_3=\{1\}, \nonumber\\
		&\Im_4=\Im_5=\cdots=\emptyset \nonumber
		\end{align}
		Therefore, no observer state is a subset of any $\Im_i$, which implies $0$-step instant pre-opacity.
	\end{myexm}
	
	\begin{figure}
		\centering
		\subfigure[$Obs(G_1)$]{\label{fig:obs1}
			\includegraphics[width=0.43\textwidth]{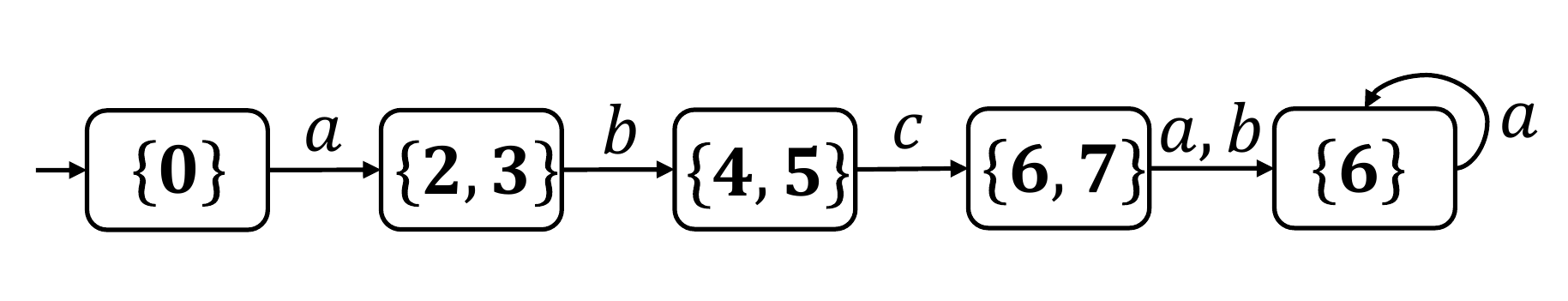}}
		\subfigure[$Obs(G_2)$]{\label{fig:obs2}
			\includegraphics[width=0.25\textwidth]{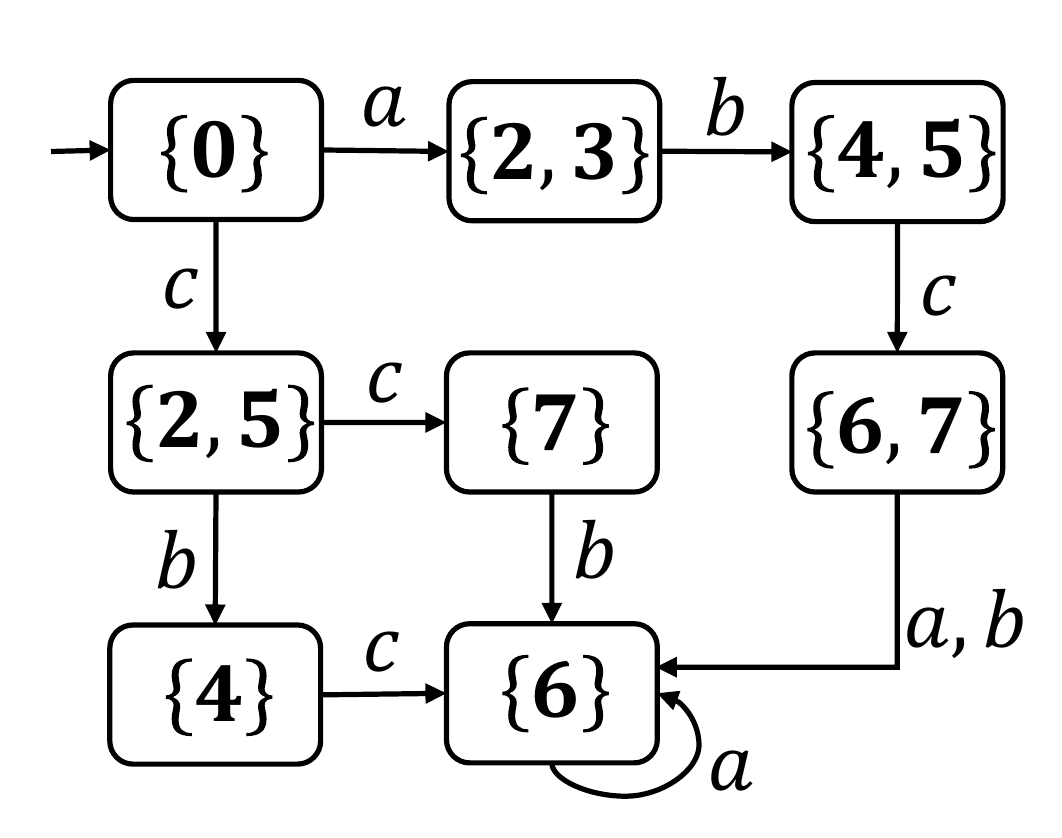}}
		\caption{Observers for $G_1$ and $G_2$, respectively.} \label{fig:obs}
	\end{figure}

	\begin{myexm}
		For system $G_2$ shown in Figure~\ref{fig:G2},  its observer is shown in Figure~\ref{fig:obs1}. 
		Then we have 
		\begin{align}
		&\Im_0=\{4,7\}, \Im_1=\{2,5\}, \Im_2=\{3\}, \Im_3=\Im_4=\cdots=\emptyset \nonumber
		\end{align}
		However, for observer state $\{2,5\}$, we have $\{2,5\}\subseteq \Im_1$, which means that 
		$G_2$ is not $1$-step instant pre-opaque. 
		On the other hand, $G_2$ is $K$-step instant pre-opaque for any $K\ge 2$ as no state in $Obs(G_2)$ is a subset of any $\Im_n,n\geq 2$.
	\end{myexm}

	\subsection{The Complexity of $K$-Step Pre-Opacity}
	
	Note that the complexity of Algorithm~\textsc{Ins-Pre-Opa-Ver} and Algorithm~\textsc{Traj-Pre-Opa-Ver} are both exponential in the number of states in $G$. 
	Next, we show that both properties are essentially PSPACE-hard; therefore, the exponential complexity seems to be unavoidable.

	\begin{mythm}
		Deciding whether or not $G$ is $K$-step instant (or trajectory) pre-opaque is PSPACE-hard even when $G$ is deterministic.
	\end{mythm}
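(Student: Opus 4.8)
The plan is to reduce from a known PSPACE-complete problem for discrete-event systems. A natural candidate is the universality (or emptiness-of-complement) problem for nondeterministic finite automata, or more directly, a reachability/safety problem under partial observation. However, since we are dealing with prediction over future instants, the most natural source problem is the verification of diagnosability-style or predictability-style properties, which are themselves related to the determinization (subset-construction) blow-up. Concretely, I would reduce from the NFA universality problem: given an NFA $N$ over alphabet $\Sigma$, deciding whether $\mathcal{L}(N)=\Sigma^*$ is PSPACE-complete. The idea is to encode the complement of the accepted language as a prediction obligation, so that opacity holds if and only if some string is rejected (witnessing non-universality), or vice versa depending on polarity.

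\textbf{The reduction.} Given an NFA $N=(Q,\Sigma,\delta,q_0,Q_m)$, I would construct in polynomial time a DFA-based system $G$ together with a secret set $X_S$, an observable event set $E_o$, and a bound $K$, such that $G$ is $K$-step instant (resp.\ trajectory) pre-opaque if and only if $N$ is \emph{not} universal. The key mechanism is to let the observation collapse the nondeterministic branches of $N$ onto a single observed string (making all events driving $N$'s guessing unobservable, and exposing only the input letters), so that the current-state estimate $\hat{\mathcal{E}}(\alpha)$ after observing $\alpha$ corresponds precisely to the subset-construction state of $N$ reached on $\alpha$. I would then arrange the post-observation dynamics so that a state is a $K$-step indicator (i.e.\ lies in $\Im_K$) exactly when it corresponds to a \emph{rejecting} run of $N$. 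By Theorem~\ref{thm:strong-basic}, instant pre-opacity fails precisely when some reachable $\hat{\mathcal{E}}(\alpha)\subseteq\Im_n$, which I would engineer to mean that \emph{every} branch on $\alpha$ rejects, i.e.\ $\alpha\notin\mathcal{L}(N)$. Thus a violation of pre-opacity witnesses non-universality, and conversely.

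\textbf{Carrying it out.} First I would fix the input-reading gadget: each letter $\sigma\in\Sigma$ becomes an observable event, while the $\epsilon$-closure and nondeterministic-choice transitions of $N$ become unobservable events, guaranteeing $\hat{\mathcal{E}}(\sigma_1\cdots\sigma_m)$ equals the set of $N$-states reachable on $\sigma_1\cdots\sigma_m$. Second, I would attach to each state of $N$ a short deterministic ``verdict gadget'' of fixed length $K$ whose terminal state is secret iff the attached $N$-state is non-accepting (or accepting, chosen to match the definition's polarity). Third, I would invoke Theorem~\ref{thm:strong-basic} (for instant) or Theorem~\ref{thm:weak} (for trajectory) to translate the pre-opacity condition into the subset-reachability statement about $N$, and verify both directions of the equivalence with $N$'s universality. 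Because the construction is polynomial in $|N|$ and PSPACE is closed under complement, hardness of universality transfers.

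\textbf{Main obstacle.} The hard part will be aligning the \emph{timing} semantics of pre-opacity with the reduction. Pre-opacity counts steps in \emph{actual events}, not observable events (as stressed in the Remark following Definition~\ref{def:ins}), and quantifies over \emph{all} $n\ge K$; so the verdict gadgets must have carefully controlled lengths and must not accidentally create indicator states at unintended instants, nor leave escape paths (via the liveness assumption) that spuriously restore non-indicator status. In particular, for the trajectory case I must ensure that the verdict gadget for an accepting branch genuinely reaches a non-secret cycle (placing it in $\mathcal{N}$), so that $\hat{\mathcal{E}}(\alpha)\cap\mathcal{N}_K\ne\emptyset$ corresponds exactly to the existence of an accepting branch. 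Getting these gadget lengths and self-loops right, while keeping $G$ deterministic and live, is the delicate engineering that makes the if-and-only-if tight.
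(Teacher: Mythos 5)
Your choice of source problem (NFA universality) and your central gadget (exposing the input letters as observable events and hiding the nondeterministic branching behind fresh unobservable events, so that $\hat{\mathcal{E}}(\alpha)$ tracks the subset construction of $N$) are exactly the right ingredients, and indeed the paper does the same thing, except that it reduces from NFA language containment $\mathcal{L}(G_1)\subseteq\mathcal{L}(G_2)$ rather than universality. However, your proposal stops precisely where the real difficulty lies, and the specific construction you do sketch --- a length-$K$ ``verdict gadget'' hanging off each $N$-state, secret at its tip iff that state rejects --- would not work as stated. The reason is the one you half-identify yourself: the conditions in Theorem~\ref{thm:strong-basic} and Theorem~\ref{thm:weak} quantify over \emph{all} continuations of the estimated states, not just the gadget paths. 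An $N$-state that can still read further input letters has length-$n$ continuations ending at ordinary (non-secret) $N$-states, so it is never in $\Im_n$ and the instant condition $\hat{\mathcal{E}}(\alpha)\subseteq\Im_n$ can never be triggered; worse, for the trajectory case the ability to read input forever places every $N$-state in $\mathcal{N}$, making the property vacuously true. Declaring this ``delicate engineering'' does not discharge it --- it is the crux of the reduction, and without resolving it the claimed equivalence with universality fails in both directions that matter.

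The missing idea, which the paper's construction supplies, is to make the secret set \emph{absorbing by fiat}: take the disjoint union of two unobservably-determinized copies and declare \emph{every} state of the first copy secret. Then an observation witnessing $\mathcal{L}(G_1)\not\subseteq\mathcal{L}(G_2)$ confines the estimate to the first copy, where, by liveness, every continuation of every length stays secret --- so both $\hat{\mathcal{E}}(\alpha)\subseteq\Im_n$ for all $n$ and $\hat{\mathcal{E}}(\alpha)\cap\mathcal{N}_0=\emptyset$ hold simultaneously, killing instant and trajectory pre-opacity with one stone and with $K=0$, no gadget lengths to tune. Conversely, containment guarantees every estimate contains a (non-secret, non-indicating) state of the second copy. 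You could salvage your universality reduction the same way (e.g.\ with an end-marker event routing each branch to an absorbing verdict state), but you would need to write that down and re-verify both directions; as submitted, the proof has a genuine gap at its central step.
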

	
	\begin{proof}
		Given two non-deterministic automata (NFAs) $G_1=(X_1,E,f_1,X_{1,0})$ and $G_2=(X_2,E,f_2,X_{2,0})$, 
		the problem of language containment asks to decide whether or not $\mathcal{L}(G_1)\subseteq \mathcal{L}(G_2)$. This problem is known to be  PSPACE-hard.	
		Hereafter, we show that checking $K$-step instant/trajectory pre-opacity  is also PSPACE-hard
		by reducing the language containment problem to the  pre-opacity verification problem.
		
		Let $G_1=(X_1,E,f_1,X_{1,0})$ and $G_2=(X_2,E,f_2,X_{2,0})$ be two NFAs with initial states $X_{1,0}$ and $X_{2,0}$, respectively. Without loss of generality, we   assume $G_1$ and $G_2$ are live;  otherwise, we can add a self-loop with a new event at each state in $G_1$ and $G_2$.
		Note that, in the analysis of pre-opacity, we assume that the transition function is deterministic; this gap can be bridged by using unobservable events to mimic non-determinism. 
		Formally, let $E_u=\{u_1,u_2,\dots, u_k\}$ be a set of new unobservable events. 
		Then for each NFA $G_i$, we construct a new DFA $\tilde{G}_i=(\tilde{X}_i,\tilde{E},\tilde{f}_i,\tilde{X}_{i,0})$ by: 
		$\tilde{X}_i= {X}_i\cup\{(x,\sigma)\in X_i\times E :f_i(x,\sigma )! \}$, 
		$\tilde{E}=E\cup E_u$, $X_{i,0}=\tilde{X}_{i,0}$, and 
		$\tilde{f}_i: \tilde{X}_i\times \tilde{E}\to \tilde{X}_i$ is the deterministic transition function defined by: for any $f_i(x,\sigma)!$, 
		we have $\tilde{f}_i(x,\sigma)=(x,\sigma)$ and $f_i(x,\sigma)=\{ \tilde{f}_i( (x,\sigma)  ,u):  u\in E_u\}$. 
		The construction of $\tilde{G}_i$ is illustrated by Figure~\ref{fig:illu}. 
		Clearly, one has  $\mathcal{L}(G_1)\subseteq \mathcal{L}(G_2)$ iff  $P(\mathcal{L}(\tilde{G}_1))\subseteq P(\mathcal{L}(\tilde{G}_2))$.

		Now we construct a new DFA $\tilde{G}=(\tilde{X},\tilde{E},\tilde{f},\tilde{X}_{0})$ by taking the union of $\tilde{G}_1$ and $\tilde{G}_2$, i.e.,
		$\tilde{X}=\tilde{X}_1\cup \tilde{X}_2$, $\tilde{f}$ is consistent with $\tilde{f}_1$ and $\tilde{f}_2$, and $\tilde{X}_{0}=\tilde{X}_{1,0}\cup \tilde{X}_{2,0}$. 
		Then, for system $\tilde{G}$, we let $X_S=X_1$ and $E_u$ be the set of unobservable events. 
		We show that $\tilde{G}$ is $0$-step instant (or trajectory) pre-opaque if and only if
		$\mathcal{L}(G_1)\subseteq \mathcal{L}(G_2)$. 
		
		($\Rightarrow$)
		To see this, we suppose that $\mathcal{L}(G_1)\not\subseteq \mathcal{L}(G_2)$,  then we know that there exists a string $s\in \mathcal{L}(G_1)\setminus \mathcal{L}(G_2)$, i.e.,  there exists a string $t\in P(\mathcal{L}(\tilde{G}_1))\setminus P(\mathcal{L}(\tilde{G}_2))$, since $\mathcal{L}(G_1)\subseteq \mathcal{L}(G_2)$ is equivalent to  $P(\mathcal{L}(\tilde{G}_1))\subseteq P(\mathcal{L}(\tilde{G}_2))$.
		Therefore, after observing $t$ in $\tilde{G}$, since $X_S=X_1$, we know   for sure that the system now is at a secret state and will be at secret states for any future instant. 
		Hence, $\tilde{G}$ is not $0$-step instant (or trajectory) pre-opaque.
		
		($\Leftarrow$) 
		Suppose that $\mathcal{L}(G_1)\subseteq \mathcal{L}(G_2)$ and we assume that, for the sake of contradiction, 
		$\tilde{G}$ is not $0$-step trajectory pre-opaque, which means it is also not   $0$-step instant pre-opaque.
		Then we know that there exists a string $s\in P(\mathcal{L}(\tilde{G}))$ such that $\hat{\mathcal{E}}(s)\cap \mathcal{N}_0=\emptyset$. 
		Note that we have $P(\mathcal{L}(\tilde{G}_1))\subseteq P(\mathcal{L}(\tilde{G}_2))$.  Since $\mathcal{L}(G_1)\subseteq \mathcal{L}(G_2)$, this also implies that $P(\mathcal{L}(\tilde{G}))=P(\mathcal{L}(\tilde{G_2}))$ and $s\in P(\mathcal{L}(\tilde{G_2}))$.
		However, since every state in $\tilde{G_2}$ is non-secret and $\tilde{G_2}$ is live, we have $\tilde{X}_2\subseteq \mathcal{N}_0$. Therefore, it is not possible that $\hat{\mathcal{E}}(s)\cap \mathcal{N}_0=\emptyset$, which is a contradiction.
		
		Overall, we conclude that deciding whether or not $G$ is $K$-step instant/trajectory pre-opacity is PSPACE-hard.
	\end{proof}

	\begin{figure}
		\centering
			\includegraphics[width=0.46\textwidth]{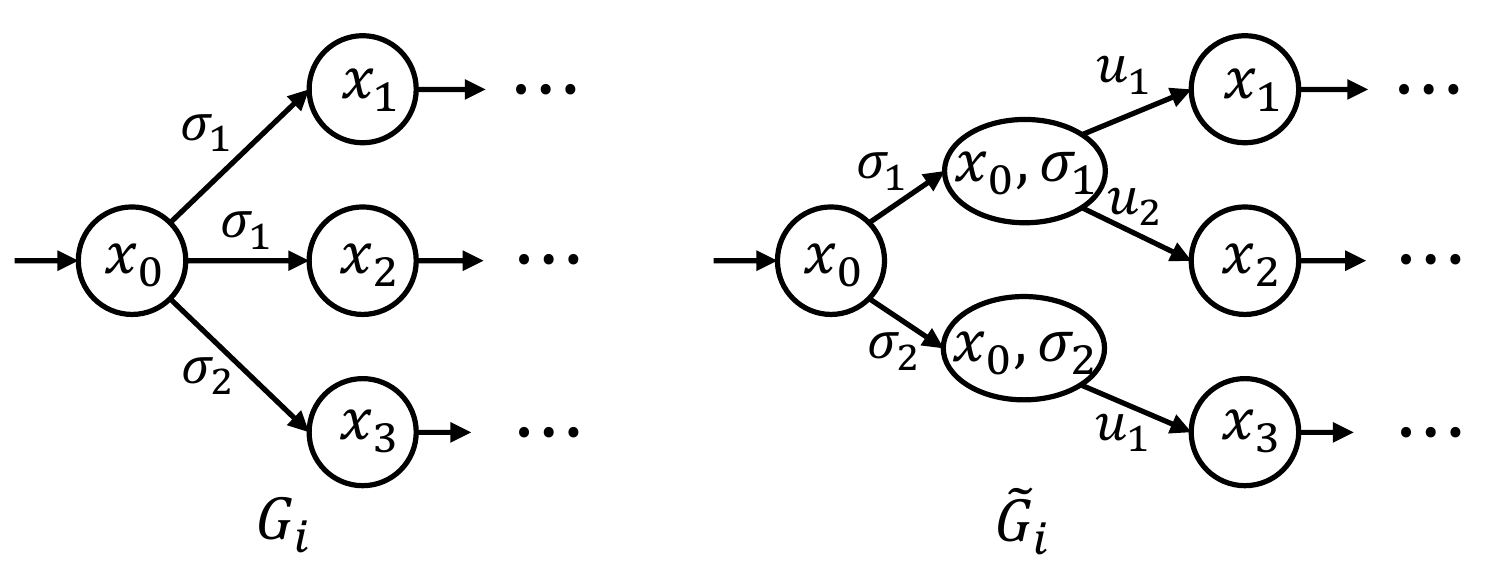}
		\caption{Conceptual illustration of how to construct $\tilde{G_i}$ from $G_i$}. \label{fig:illu}
	\end{figure} 
	
	\section{Secret Intention as a Sequence Pattern}\label{sec:5}
		\begin{figure*}
		\centering
		\subfigure[The topology of a factory]{\label{fig:case}
			\includegraphics[width=0.32\textwidth]{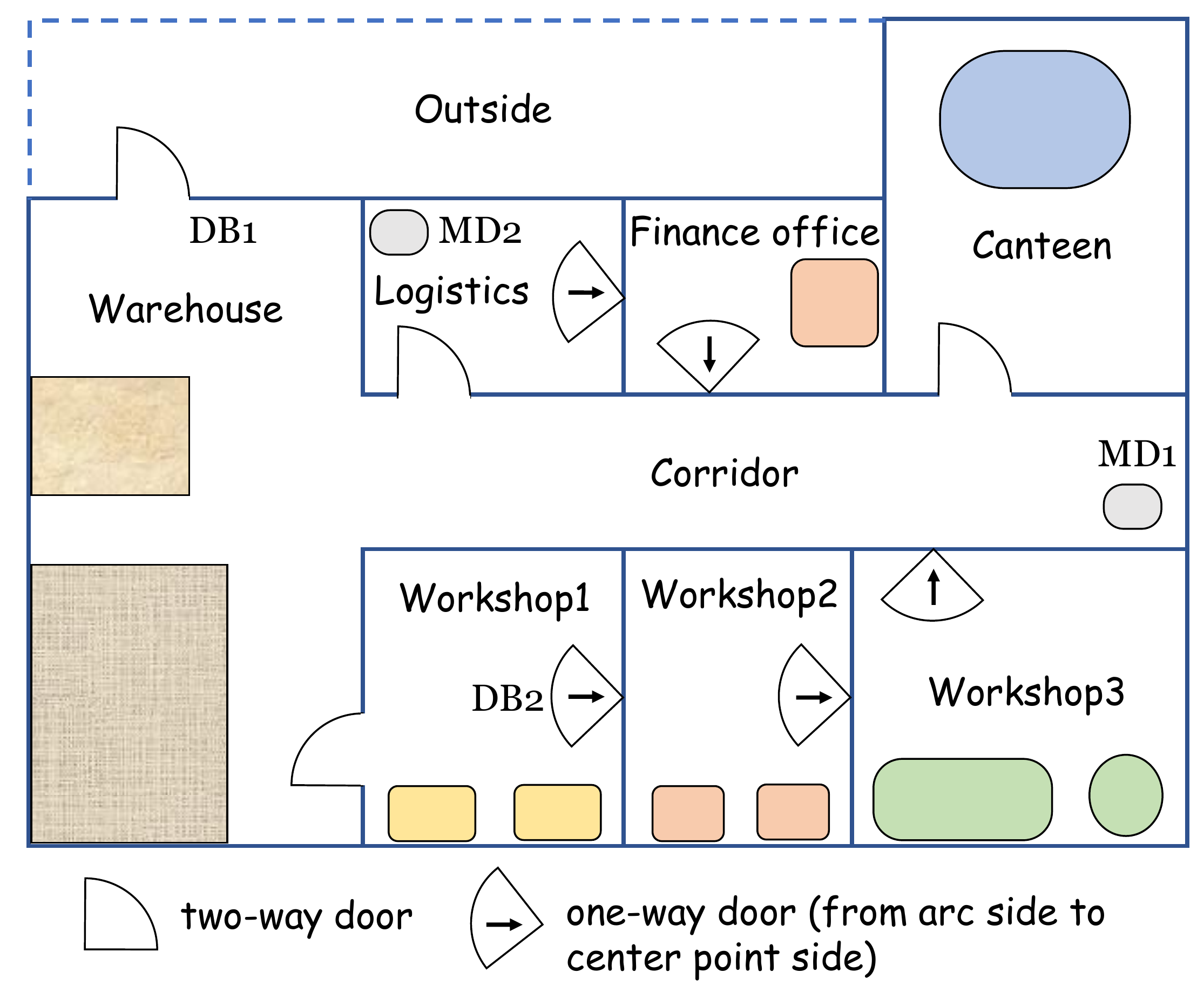}}
		\subfigure[The specification automaton $G_3$]{\label{fig:G3}
			\includegraphics[width=0.3\textwidth]{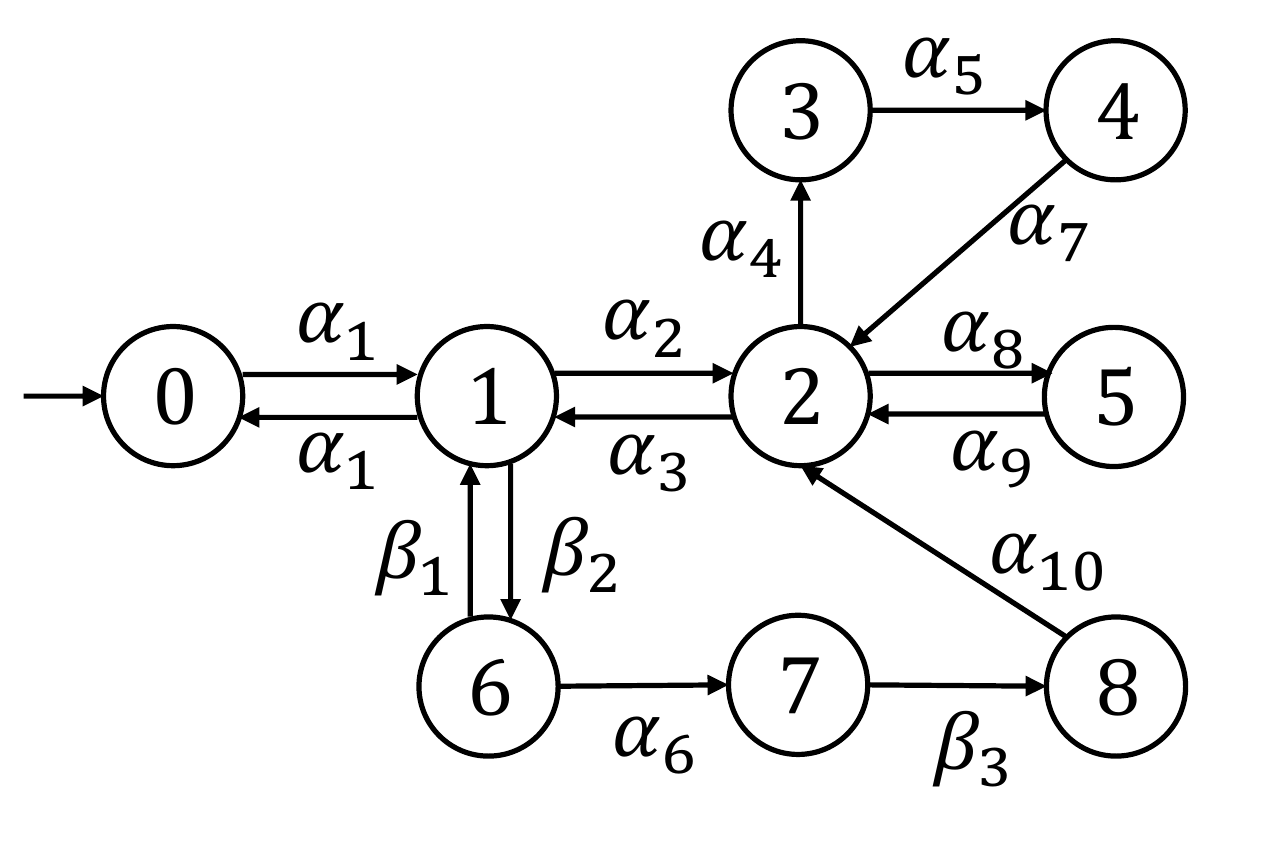}}
		\subfigure[$G_{\Omega}$]{\label{fig:marked}
			\includegraphics[width=0.32\textwidth]{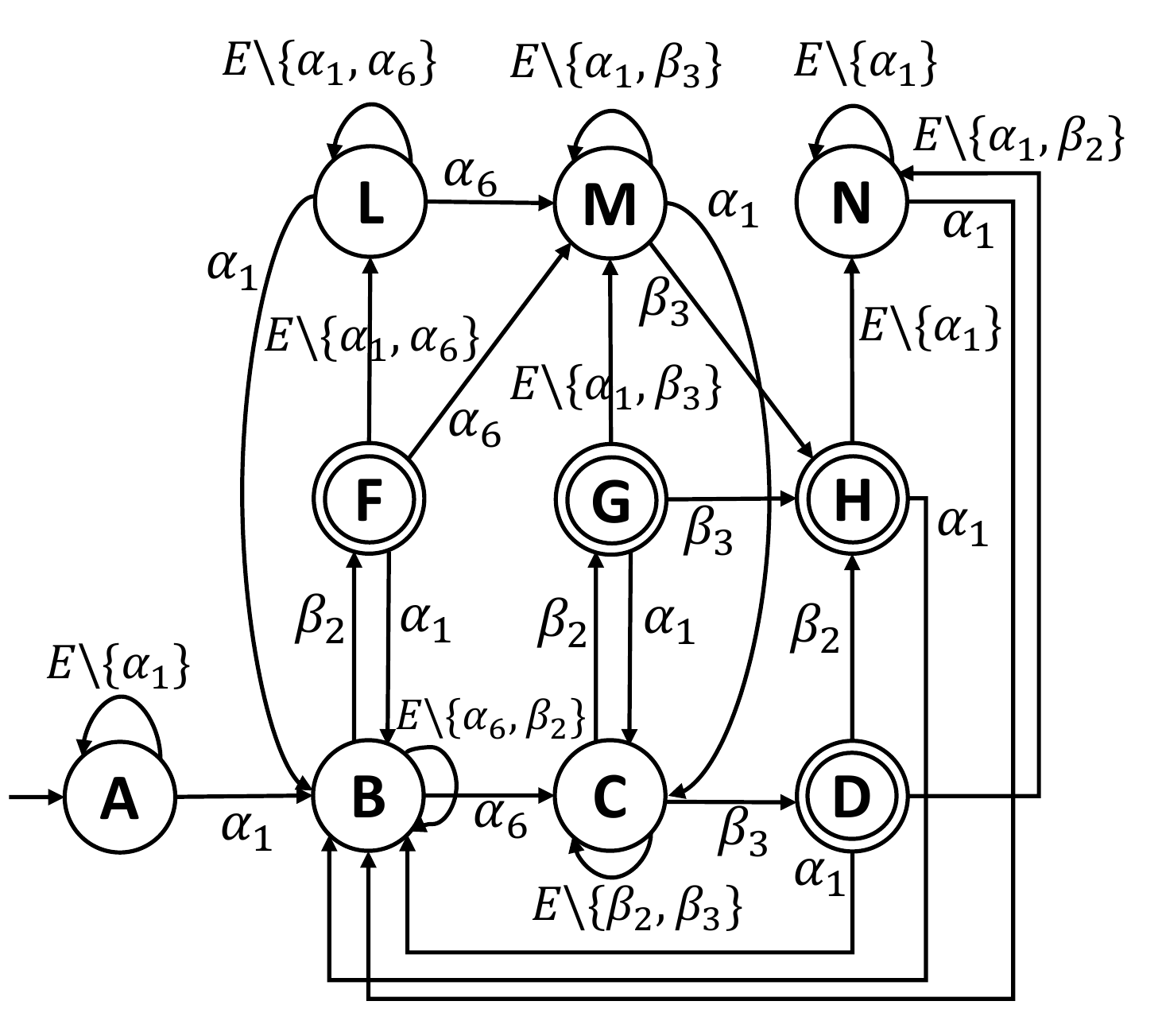}}
		
		\caption{An illustrative case of pattern pre-opacity.} \label{fig:exm2}
	\end{figure*}
	In the previous sections, we model the secret intention of the system as the willing to reach some secret states. In this section, we further generalize this setting by considering the secret intention as the willing to execute some particular sequences of events, which we call a \emph{sequence pattern}. 
	We present an illustrative example that motivates the definition of pattern pre-opacity and show how it can be reduced to state-based pre-opacity. 
	
	\subsection{Illustrative Example of Pattern Pre-Opacity}

    We consider a location-tracking/prediction type problem in a smart factory building equipped with sensors as shown in Figure~\ref{fig:case}.  
    The factory has eight regions of interest:  a \emph{warehouse}, a \emph{logistics}, a \emph{finance office}, a \emph{canteen}, a \emph{corridor} and three \emph{workshops}.  
    We assume there is a person in the factory  that can move from one region to another by passing through a door; some doors are one-way and some are two-way as depicted in the figure. 
	In particular, there are two doors $DB_1$ and $DB_2$ secured by door barrier sensors, which allow to observe if a person enters the corresponding rooms. 
	Furthermore,  there are two additional motion detector sensors ($MD_1$ and $MD_2$) at corridor and logistics, respectively; they can detect if a person moves to the corridor (or logistics).
	The building monitor is able to use these sensors to track and predict the behavior of the person. 
	
	According to the structure of factory and different types of doors, the overall system, which is the mobility of the person, can be modeled as a DES  as shown in Figure~\ref{fig:G3}, 
	where states $0$ to $8$ represent, respectively, regions \emph{outside}, \emph{warehouse}, \emph{corridor}, \emph{logistics}, \emph{finance office}, \emph{canteen}, \emph{workshop 1}, \emph{workshop 2} and \emph{workshop 3}.  
	Based on the distribution of motion detector sensors and door barrier sensors, we know that the set of observable events is 
	\[
	E_o=\{\alpha_1,\alpha_2,\alpha_3,\alpha_4,\alpha_5,\alpha_6,\alpha_7,\alpha_8,\alpha_9,\alpha_{10}\}.
	\]

	Now we assume that the person wants to move in the factory to complete two tasks ``secretly": 
	(\emph{task 1}) first goes to \emph{warehouse} and then goes to \emph{workshop 1}; 
	(\emph{task 2}) first goes to \emph{warehouse}, and then enters \emph{workshop 2}, and finally gets to \emph{workshop 3}. 
	Furthermore, the person wants to hide its intention for executing the above sequences against the monitor before they are completed. 
	In this setting, ``secret intention" is no longer visiting a secret state in the future. 
	Instead, completing any sequence containing string $\alpha_1\beta_2$ or $\alpha_1\alpha_6\beta_3$ can be viewed as a secret behavior. 	
	One can check that the person may not be able to hide  its intention to complete task 2 more than one step before its completion. 
	This is because once  motion detector sensor $DB_2$ is triggered, the building monitor can determine for sure that the person was from \emph{warehouse}, and is currently at \emph{workshop 2} and  will go to \emph{workshop 3} in one step to complete the task. 
	To formally describe this scenario, we propose $K$-step instant/trajectory pattern pre-opacity in the next part.

	\subsection{Definitions of Pattern Pre-Opacity} 
	Now, we formally formulate the notion of pattern pre-opacity.  
	Specifically, we consider a  \emph{sequence pattern}  modeled as a regular language $\Omega\subseteq E^*$ in order to describe the secret behaviors of the system.  
	Then we say that a system is $K$-step pattern pre-opaque if any completion of a string in the pattern can be predicted $K$-step ahead. 
	Depending on whether or not the intruder needs to determine the specific instant of the completion, 
	pattern pre-opacity can also be categorized as instant pre-opacity and trajectory pre-opacity.

	\begin{mydef}($K$-Step Instant Pattern Pre-Opacity)\label{def:pinst}
		Given system $G$,  set of observable events $E_o$,  sequence pattern $\Omega$, and non-negative integer $K\in \mathbb{N}$, system $G$ is said to be $K$-step instant pattern pre-opaque (w.r.t.\ $E_o$ and $\Omega$) if 
		\begin{align}
		&(\forall x_0\in X_0)(\forall s\in \mathcal{L}_o(G,x_0))(\forall n \geq K ) \nonumber \\
		&(\exists x_0'\in X_0)(\exists s'\in \mathcal{L}_o(G,x'_0), t\in \mathcal{L}(G,f(x'_0,s')) \\\nonumber
		&[P(s)=P(s')  \wedge  |t|=n  \wedge s't\notin \Omega]
		\end{align}
	\end{mydef}

	\begin{mydef}($K$-Step Trajectory Pattern Pre-Opacity)\label{def:ptraj}
	Given system $G$,  set of observable events $E_o$, a sequence pattern $\Omega$, and non-negative integer $K\in \mathbb{N}$, system $G$ is said to be $K$-step trajectory pattern pre-opaque (w.r.t.\ $E_o$ and $\Omega$) if 
	\begin{align}
	&(\forall x_0\in X_0,\forall s\in \mathcal{L}_o(G,x_0))(\forall n\ge K)  \nonumber \\
	&(\exists x_0'\in X_0,\exists s'\in \mathcal{L}_o(G,x'_0) , \exists t_1t_2\in \mathcal{L}(G,f(x_0',s'))\text{ s.t. } \nonumber\\
	&[P(s)=P(s')]  \wedge  [|t_1|=K]\wedge  [|t_1t_2|=n]  \wedge \nonumber\\
	&\quad[\forall w\!\in\! \overline{\{t_2\}}:s't_1w\notin		 \Omega ]\nonumber
	\end{align}
    \end{mydef}

	Intuitively, $K$-step trajectory pattern pre-opacity says that, for any observation, the intruder cannot predict $K$-step ahead that a secret sequence will be completed. 
	The definition of instant pattern pre-opacity is similar; the only difference is that it also requires to specify the specific instant of the completion. 
 	Clearly, pre-opacity is a special case of pattern pre-opacity as we can define all sequences reaching secret states as the sequence pattern. 
 	Hereafter, we will show that pattern pre-opacity can also be transformed to standard pre-opacity by refining the state-space and suitably defining secret states.

	\begin{myexm}
		Consider again the example shown in Figure~\ref{fig:G3}. 
		The secret sequence pattern can be described by the regular language
		\begin{align}
		   \Omega=
		   &((E\backslash\{\alpha_1\})^*\{\alpha_1\}(E\backslash\{\alpha_6\})^*\{\alpha_6\}(E\backslash\{\beta_{3}\})^*\{\beta_3\}\nonumber\\
		   &\cup  (E\backslash\{\alpha_1\})^*\{\alpha_1\} (E\backslash\{\beta_{2}\})^* \{\beta_2\})^*
		\end{align}
		Essentially, regular language $\Omega$ includes all strings that contain $\alpha_1\alpha_6\beta_3$ or $\alpha_1\beta_2$. 
		This language can be marked by DFA $G_{\Omega}$  shown in Figure~\ref{fig:marked}.
		Obviously, $G_3$ is 2-step instant pattern pre-opaque, since based on any observation, the intruder cannot know for sure the system will finish a sequence pattern $\alpha_1\beta_2$ or $\alpha_1\alpha_6\beta_3$ 2-step ahead.
		However, as we discussed early, it is not 1-step instant pattern pre-opaque;  this is because, once string $\alpha_1\alpha_6$ is observed, the monitor knows for sure that sequence $\alpha_1\alpha_6\beta_{3}$ will be completed in 1-step.		
		Also, we can check that $G_3$ is 2-step trajectory pattern pre-opaque but not 1-step trajectory pattern pre-opaque.
		
		Note that when string $\alpha_1\alpha_6$ is observed, we know that sequence $\alpha_1\beta_{2}$ has been finished one step ago.
		Although the monitor fails to detect pattern $\alpha_1\beta_{2}$ before its completion, it still can predict secret sequence pattern $\alpha_1\alpha_6\beta_{3}$.
	\end{myexm}

	\begin{remark}
		The concept of \emph{sequence pattern} was first proposed in  the literature for the purpose of fault diagnosis  \cite{jeron2006supervision} and fault prognosis  \cite{jeron2008pred}.   
		Specifically, a sequence pattern is used to model the set of behaviors considered as fault. 
		Our notion of sequence pattern is more general than that in the context of fault diagnosis/prognosis.  
		In particular, in the  context of fault diagnosis/prognosis, the sequence pattern is assumed to be \emph{stable} in the sense that any continuation of a sequence in the pattern is still in the pattern. This is motivated by the setting of permanent fault. However, our definition of sequence pattern does not necessarily be stable as the system can be secret/non-secret intermittently. 
		In other words, even if the intruder miss the predication of the first pattern, it may still be able to predict some future pattern, and in this case, the system is also not pre-opaque.
	\end{remark}

	\subsection{Verifications of Pattern Pre-Opacity}
	
	We show how to verify pattern pre-opacity in this part.
	To this end, we assume that the secret sequence pattern $\Omega$ is a regular language   and it is recognized by a DFA $G_\Omega=(X_{\Omega}, E, f_{\Omega}, x_{0,\Omega},X_{m,\Omega})$, i.e., $\mathcal{L}_m(G_\Omega)=\Omega$, where $x_{0,\Omega}$ is the unique initial state. 
	Without loss of generality, we assume that $G_\Omega$ is total, i.e., $\mathcal{L}(G_\Omega)= E^*$; otherwise, we can add a new unmarked ``dump" state and complete the transition function. 
	
	Then let $G=(X,E,f,X_0)$ be the system and $G_\Omega=(X_{\Omega}, E, f_{\Omega}, x_{0,\Omega},X_{m,\Omega})$ be the DFA recognizing the sequence pattern. 
	We define the product of $G$ and $G_\Omega$ as
	\[
	G_\times =(X',E',f',X'_0), 
	\]
	where $X'\subseteq X\times X_{\Omega}, E'=E, X'_0= X_0\times \{x_{0,\Omega}\}$ and $f': X'\times E\to X'$ is the transition function defined by 
	$f((x_1,x_2),\sigma)=(f(x_1,\sigma),f_{\Omega}(x_{2},\sigma))$, if $f(x_1,\sigma)$	and $f_{\Omega}(x_{2},\sigma)$ are defined, and undefined  otherwise. 
	Then we define
	\[ 
	X'_S=\{(q_1,q_2):q_2\in X_{m,\Omega}\}
	\]
	as the set of secret states in  $G_\times$. 
	Then the following result shows that pattern pre-opacity can be transformed to state-based pre-opacity. 
	
	\begin{mythm}\label{thm:pattern}
	System $G$ is  $K$-step instant (respectively, trajectory) pattern pre-opaque w.r.t. $\Omega$ if and only if
	$G\times G_{\Omega}$ is $K$-step instant  (respectively, trajectory)  pre-opaque w.r.t. $X'_S$.
    \end{mythm}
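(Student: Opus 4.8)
The plan is to prove both equivalences (instant and trajectory) simultaneously by reducing each to two structural facts about the product $G_\times = G \times G_\Omega$, after which the defining conditions match term-by-term. The essential idea is that $G_\times$ faithfully preserves the behavior of $G$ while the $G_\Omega$-component records, in its current state, exactly whether the string read so far lies in $\Omega$. Once this is made precise, the only difference between the state-based and pattern definitions, namely ``$f'(\cdot)\in X'_S$'' versus ``$(\cdot)\in\Omega$'', collapses.

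First I would establish the language-preservation fact: for every $x \in X$ we have $\mathcal{L}(G_\times, (x, x_{0,\Omega})) = \mathcal{L}(G, x)$, and moreover the extended transition satisfies $f'((x, x_{0,\Omega}), s) = (f(x,s),\, f_\Omega(x_{0,\Omega}, s))$ whenever $f(x,s)$ is defined. This is where the totality assumption $\mathcal{L}(G_\Omega) = E^*$ is indispensable: a string $s$ survives in the product iff both $f(x,s)$ and $f_\Omega(x_{0,\Omega}, s)$ are defined, and totality of $G_\Omega$ makes the latter always hold, so the product neither adds nor deletes strings; the same argument applied at any reachable product state shows its generated language equals that of its $G$-coordinate. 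Together with liveness of $G$ this also keeps $G_\times$ live, so Definitions~\ref{def:ins} and \ref{def:traj} legitimately apply to it. Since $E' = E$ and the observable/unobservable partition is inherited unchanged, the projection $P$ and the sets $\mathcal{L}_o(\cdot,\cdot)$ coincide between $G$ and $G_\times$, and the initial states are in bijection via $x_0 \mapsto (x_0, x_{0,\Omega})$.

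Second I would establish the secret-correspondence fact: for any initial state $(x_0, x_{0,\Omega}) \in X'_0$ and any $s \in \mathcal{L}(G, x_0)$,
\[
f'((x_0, x_{0,\Omega}), s) \in X'_S \iff s \in \Omega.
\]
Indeed, $f'((x_0,x_{0,\Omega}),s)\in X'_S$ means its second coordinate $f_\Omega(x_{0,\Omega},s)$ is marked, which by determinism of $G_\Omega$ and $\mathcal{L}_m(G_\Omega)=\Omega$ happens precisely when $s \in \Omega$. The point worth stressing is that all product initial states share the single pattern-initial-state $x_{0,\Omega}$, so the pattern coordinate depends on the string alone, independent of which $G$-initial-state was chosen.

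With these two facts the equivalence is immediate: I compare Definition~\ref{def:pinst} against Definition~\ref{def:ins} (instant case) and Definition~\ref{def:ptraj} against Definition~\ref{def:traj} (trajectory case), both instantiated on $G_\times$ with $X'_S$. The language-preservation fact transports every quantifier $\forall x_0,\forall s,\exists x_0',\exists s',\exists t$ (and $\exists t_1t_2$, $\forall w$) verbatim, since the generated and observable languages agree and $t\in\mathcal{L}(G_\times,f'((x_0',x_{0,\Omega}),s'))$ reduces to $t\in\mathcal{L}(G,f(x_0',s'))$; the secret-correspondence fact rewrites each $f'(\cdot)\in X'_S$ as $(\cdot)\in\Omega$, which is exactly the clause that distinguishes the two definitions. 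The only step I would verify most carefully is the trajectory case, where the condition $[\forall w \in \overline{\{t_2\}}: f'((x_0',x_{0,\Omega}), s' t_1 w) \notin X'_S]$ must be matched with $[\forall w \in \overline{\{t_2\}}: s' t_1 w \notin \Omega]$: here the correspondence is applied to every prefix $s' t_1 w$ individually, which is legitimate precisely because it holds for arbitrary strings and not merely maximal ones. No genuine obstacle arises beyond this bookkeeping, provided the totality hypothesis on $G_\Omega$ is never dropped.
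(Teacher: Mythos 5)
Your proof is correct and follows essentially the same route as the paper's: both hinge on totality of $G_\Omega$ forcing the product to generate exactly $\mathcal{L}(G)$ (the paper uses the two inclusions $\mathcal{L}(G)\subseteq\mathcal{L}(G_\Omega)$ and $\mathcal{L}(G\times G_\Omega)\subseteq\mathcal{L}(G)$, one per direction of its contrapositive argument) together with the fact that reaching $X'_S$ in the product is equivalent to membership in $\Omega$. You merely reorganize this as two explicit lemmas followed by a term-by-term match of the definitions, and you spell out the trajectory case (prefix-by-prefix application of the secret correspondence) that the paper dismisses as ``similar'' --- a somewhat cleaner and more complete presentation of the same idea.
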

	\begin{proof}
		We only show the case of instant pre-opacity; the case of trajectory pre-opacity is similar. 
		
		($\Rightarrow$) Suppose that $G\times G_{\Omega}$ is not $K$-step instant pre-opaque, which implies that 
		\begin{align}
		&(\exists (x_0,x_{0,\Omega})\in X'_0)(\exists s\in \mathcal{L}_o(G\times  G_{\Omega},(x_0,x_{0,\Omega})))(\exists n_0 \geq K ) \nonumber \\
		&(\forall (x_0',x'_{0,\Omega})\in X'_0)\nonumber\\
		&(\forall s'\in \mathcal{L}_o(G\times  G_{\Omega},(x_0',x'_{0,\Omega})), s't\in \mathcal{L}(G\times  G_{\Omega},(x_0',x'_{0,\Omega}))) \nonumber\\
		&[P(s)=P(s')  \wedge  |t|=n_0]  \Rightarrow [f'((x_0',x'_{0,\Omega}),s't)\in X'_S]\nonumber
		\end{align} 
		Since $G_{\Omega}$ is complete, we have                                                                                                                                                                                                                                                                                                                                                                                                                                                                                                                                                                                                                                                                                                                                                                                                                                                                                                                                                                                                                                                                                                                                                                                                                                                                                                                                                                                                                                                                                                                                                                                                                                                                                                                                                                                                                                                                                                                                                                                                                                                                                                                                                                                                                                                                                                                                                                                                                                                                                                                                                                                                                                                                                                                                                                                                                                $\mathcal{L}(G)\subseteq \mathcal{L}(G_{\Omega})$, then we know that for any $x'_{0}\in X_0$, any $s'\in \mathcal{L}_o(G,x'_0), s't\in \mathcal{L}(G,x'_0) $ such that $P(s)=P(s')$ and $|t|=n_0\ge K$, we have that $f_{\Omega}(x'_{0},s't)\in X_{m,\Omega}$, i.e., $s't\in \mathcal{L}_m( G_{\Omega})=\Omega$.
		This implies that   $G$ is not $K$-step instant pattern pre-opaque
		
		($\Leftarrow$) 
		Assume that $G$ is not $K$-step instant pattern pre-opaque, i.e.,
		\begin{align}
		&(\exists x_0\in X_0)(\exists s\in \mathcal{L}_o(G,x_0))(\exists n_0 \geq K ) \nonumber \\
		&(\forall x_0'\in X_0)(\forall s'\in \mathcal{L}_o(G,x'_0), s't\in \mathcal{L}(G,x'_0)) \nonumber\\
		&[P(s)=P(s')  \wedge  |t|=n_0  \wedge s't\in \Omega]\nonumber
		\end{align}
		Since $\mathcal{L}(G\times  G_{\Omega})\subseteq \mathcal{L}(G)$, we know that for any $(x'_0,x_{0,\Omega})\in X'_0, s'\in \mathcal{L}_o(G\times  G_{\Omega},(x'_0,x_{0,\Omega}))$ and $s't\in \mathcal{L}(G\times  G_{\Omega},(x'_0,x_{0,\Omega}))$ such that $P(s')=P(s)$ and $|t|=n_0$, we always have  $f'((x'_0,x'_{0,\Omega}),s't)\in X'_S$ and $|t|=n_0\ge K$, which means that $G\times G_{\Omega}$ is not $K$-step instant pre-opaque.  
	\end{proof}

	\begin{myexm}
		Consider again system automaton $G_3$ and pattern automaton $G_{\Omega}$ in Figure~\ref{fig:exm2}. 
		To verify the $K$-step instant/trajectory pattern pre-opacity of $G_3$, we fist construct $G_3\times G_{\Omega}$, which is omitted here for the sake of brevity. 
		Then the set of secret states in $G_3\times G_{\Omega}$ is $X'_S=\{(6,F),(8,H)\}$.
		One can verify that $G_3\times G_{\Omega}$ is 2-step instant pre-opaque but not 1-step instant pre-opaque; 
		also, $G_3\times  G_{\Omega}$ is 2-step trajectory pre-opaque but not 1-step trajectory pre-opaque.
		Therefore, based on Theorem~\ref{thm:pattern}, for sequence pattern captured by $\Omega$, we know that $G_3$ is $2$-step instant pattern pre-opaque but not $1$-step instant pattern pre-opaque, and it is $2$-step trajectory pattern pre-opaque but not $1$-step trajectory pattern pre-opaque, which are consistent with our previous analysis. 
	\end{myexm}
	
	\section{Conclusion}\label{sec:6}
	In this paper, we proposed the notion of pre-opacity to verify the \emph{intention security} of a partially-observed DES. 
	Two notions of pre-opacity called $K$-step instant pre-opacity and $K$-step trajectory pre-opacity are proposed. 
	For each notion of pre-opacity, we provide a verifiable necessary and sufficient condition as well as an effective verification algorithm. We also generalize the notions of pre-opacity to the case where the secret behavior is captured by a sequence pattern. 
	Our work extends the theory of opacity to a new class where   secret is related to the intention of the system.  
	We believe there are many interesting future directions related to the concept of  pre-opacity. 
	One interesting  direction  is  to \emph{synthesize} a supervisor  to enforce pre-opacity when the verification result is negative. 
	Also, we would like to extend the notion of pre-opacity to the stochastic setting to quantitatively evaluate the information leakage.
	
	\bibliographystyle{plain}
	\bibliography{des}

\end{document}